\documentclass[11pt, a4paper]{article}
\usepackage{fullpage}

\usepackage{belov_paper}
\usepackage{algorithm,algpseudocode}


\renewcommand{\eps}{\epsilon}

\newcommand{\MakeAlias}[2]{\newenvironment{#1}{\begin{#2}}{\end{#2}}}

\MakeAlias{definition}{defn}
\MakeAlias{theorem}{thm}
\MakeAlias{lemma}{lem}
\MakeAlias{claim}{clm}
\MakeAlias{proposition}{prp}
\MakeAlias{corollary}{cor}
\MakeAlias{conjecture}{conj}
\MakeAlias{remark}{rem}

\newcommand{\Hybrid}{\mathrm{Hybrid}}
\newcommand{\R}{\mathbb{R}}


\title{A Polynomial Lower Bound for Testing Monotonicity}
\author{
  Aleksandrs Belovs \\
  CWI \\
  Amsterdam, the Netherlands \\
  \texttt{stiboh@gmail.com} 
    \and 
  Eric Blais \\
  University of Waterloo \\
  Waterloo, Canada \\
  \texttt{eric.blais@uwaterloo.ca}
}

\begin{document}
\maketitle

\begin{abstract}
	We show that every algorithm 
	for testing $n$-variate Boolean functions for monotonicity
	must have query complexity $\tilde{\Omega}(n^{1/4})$.
	All previous lower bounds for this problem
	were designed for non-adaptive algorithms and, as a result,
	the best previous lower bound for general (possibly adaptive) monotonicity
	testers was only $\Omega(\log n)$.
	Combined with the query complexity of the 
	non-adaptive monotonicity tester of Khot, Minzer, and Safra (FOCS 2015),
	our lower bound
	shows that adaptivity can result in at most a quadratic reduction in the 
	query complexity for testing monotonicity.

	By contrast, we show that there is an exponential gap
	between the query complexity of adaptive and non-adaptive algorithms for testing
	regular linear threshold functions (LTFs) for monotonicity.
	 Chen, De, Servedio, and Tan (STOC 2015)
	recently showed that non-adaptive algorithms require almost 
	$\Omega(n^{1/2})$ queries for this task. We
	introduce a new adaptive monotonicity testing algorithm which has
	query complexity $O(\log n)$ when the input is a regular LTF.
\end{abstract}

\setcounter{page}{0}
\thispagestyle{empty}
\newpage

\section{Introduction}

The Boolean function $f\colon\cube\to\bool$ is {\em monotone} iff $f(x)\le f(y)$ for all $x\preceq y$, where $\preceq$ is the bitwise partial order on the Boolean hypercube $\cube$ (i.e., $x\preceq y$ iff $x_i\le y_i$ for all $i\in[n]$).
Conversely, the function $f$ is {\em $\eps$-far from monotone} for some $\epsilon > 0$ if for every monotone function $g\colon\cube\to\bool$, 
there are at least $\epsilon 2^n$ points $x \in \cube$ such that
$f(x) \neq g(x)$.
An {\em $\eps$-tester for monotonicity} is a bounded-error randomized algorithm that distinguishes monotone functions from those that are $\eps$-far from monotone.
The tester has oracle access to the function $f$.
It is \emph{non-adaptive} if its queries do not depend on the oracle's responses to the previous queries; otherwise, it is \emph{adaptive}.

The study of the monotonicity testing problem was initiated in 1998 by Goldreich, Goldwasser, Lehman, and Ron~\cite{goldreich:monotonicityTesting}, who introduced
the natural \emph{edge tester} for monotonicity. This tester selects
edges $x \preceq y$ of the hypercube $\cube$ uniformly at random 
and verifies that $f(x) \le f(y)$ on each of these edges. In the journal version of the paper~\cite{goldreich:monotonicityTesting}, they showed that this tester has query complexity
$O(n/\epsilon)$, proved that their analysis of the algorithm was tight, and asked: are there any other $\epsilon$-testers for monotonicity with significantly smaller query complexity?

\subsection{Previous work on monotonicity testing}

In 2002, Fischer \etal~\cite{fischer:monotonicitytesting} showed that 
every non-adaptive tester for monotonicity has query complexity $\Omega(\log n)$.%
\footnote{Throughout the paper, we assume that $\eps=\Theta(1)$ in the lower bound settings.}
This immediately implies an $\Omega(\log\log n)$ lower bound for the more general class of adaptive testers for monotonicity. 
Since then, stronger lower bounds were established for more restricted classes of algorithms, like 1-sided non-adaptive algorithms~\cite{fischer:monotonicitytesting} and even more limited \emph{pair testers}~\cite{briet:monotonicityAndRouting}---algorithms that select pairs 
$x \preceq y$ of inputs from some distribution over the comparable pairs of 
inputs in the hypercube and check that $f(x) \le f(y)$ on each selected pair. 
Algorithms and strong lower bounds were also introduced for the related problem 
of testing monotonicity of functions 
with non-Boolean ranges~\cite{dodis:improvedTestingMonotonicity, blais:testingLowerViaCommunication, chakrabarty:OptimalLowerBoundMonotonicity}.
However, there was no further progress on Goldreich \etal's original question 
for more than a decade, until a recent outburst of activity.

In 2013, Chakrabarty and Seshadhri~\cite{chakrabarty:sublinearMonotonicity}
showed that there are indeed testers for monotonicity with query complexity
asymptotically smaller than that of the edge tester. 
They introduced a pair tester with query complexity $\tilde{O}(n^{7/8}\eps^{-3/2})$.
Chen, Servedio and Tan~\cite{chen:newAlgorithmsLowerBoundsMonotonicity} further developed these ideas to obtain a pair tester with query complexity
$\tO(n^{5/6}\eps^{-4})$.
Khot, Minzer, and Safra~\cite{khot:monotonicityTesting} 
showed that a directed version of Talagrand's isoperimetric inequality yields a 
pair tester with query complexity $\tO(\sqrt{n}/\eps^2)$.
The authors~\cite{belovs:monotonicityQuantum} used this inequality to develop a \emph{quantum} tester for 
monotonicity with query complexity $\tO(n^{1/4}\eps^{-1/2})$. 

On the lower bound side, Chen, Servedio and Tan~\cite{chen:newAlgorithmsLowerBoundsMonotonicity} established a lower bound of $\tOmega(n^{1/5})$ queries for all non-adaptive testers for monotonicity.
This lower bound was later improved to almost $\Omega(\sqrt{n})$ by Chen, De, Servedio and Tan~\cite{chen:booleanMonotonicityRequiresSqrtn}.
These recent developments essentially give a complete answer to the
question of Goldreich \etal~for non-adaptive algorithms: there exists a 
non-adaptive tester for monotonicity with query complexity that is quadratically
smaller than that of the edge tester, and this gap is best possible.

\subsection{Our results}

\mytxtcommand{chen}{Chen--Servedio--Tan}
\mytxtcommand{chenDe}{Chen--De--Servedio--Tan}

Despite all the recent progress on monotonicity, our understanding of 
the query complexity of \emph{adaptive} testers for monotonicity remains 
far from complete.
The best lower bound for the problem
is $\Omega(\log n)$, which follows
directly from the non-adaptive lower bound of 
Chen \etal~\cite{chen:newAlgorithmsLowerBoundsMonotonicity}.
This lower bound leaves open the possibility that there exist testers for
monotonicity with query complexity that is \emph{exponentially} smaller
than that of the edge tester or of any other non-adaptive
tester for monotonicity.
Our main result eliminates this possibility.

\begin{thm}
\label{thm:main}
There exists an absolute constant $\eps>0$ such that any (adaptive) randomized algorithm that $\eps$-tests whether an $n$-variate Boolean function $f$ is monotone makes $\tOmega(n^{1/4})$ queries to $f$.
\end{thm}

Theorem~\ref{thm:main} shows that the query complexity of any tester for monotonicity 
(adaptive or not) is at most a quartic factor better than that of the edge tester, 
and that adaptivity can result in at most a quadratic reduction in the query 
complexity for the monotonicity testing problem.

The proof of Theorem~\ref{thm:main} is established by considering 
random functions known as \emph{Talagrand's random DNFs}. 
These monotone functions have previously appeared in 
many different contexts---including DNF approximation~\cite{odonnell:ApproxByDNF}, 
hardness amplification~\cite{bogdanov:HardnessAmplification}, and
learning theory~\cite{lee:LearningTalagrandDNF}---and are of particular
interest because of their extremal noise sensitivity properties~\cite{mossel:noiseSensitivityMonotone}.
We use the same noise sensitivity properties to show that Talagrand's random DNF with $\sqrt{n}$ random input variables negated is $\Omega(1)$-far from monotone with high probability, and that 
a randomized algorithm 
with small query complexity cannot reliably distinguish original Talagrand's random 
DNFs from this modified version. 

Our approach represents a notable departure from previous lower bounds for the 
monotonicity testing problem, in that all the previous lower bounds~\cite{fischer:monotonicitytesting, chen:newAlgorithmsLowerBoundsMonotonicity, chen:booleanMonotonicityRequiresSqrtn} were obtained by considering linear
threshold functions (LTFs)---Boolean functions of the form
$f(x) = \sgn(\sum_{i\in [n]} w_i x_i - \theta)$ with appropriate \emph{weight} $w_1,\ldots,w_n \in \R$ and \emph{threshold} $\theta \in \R$ parameters.
In fact, the previous lower bounds for monotonicity testing were obtained
by considering a special class of LTFs known as \emph{regular LTFs}.
An LTF is \emph{$\tau$-regular} when the magnitude of each weight
$w_i$ is bounded by $|w_i| \le \tau \cdot \sqrt{\sum_{j \in [n]} w_j^{\,2}}$.
Regular LTFs have been studied in the context of approximating~\cite{diakonikolas:ImprovedApproximationLTF}, learning~\cite{odonnell:ChowParameters}, 
and testing~\cite{matulef:testingHalfspaces} LTFs;
the lower bounds in~\cite{chen:newAlgorithmsLowerBoundsMonotonicity, chen:booleanMonotonicityRequiresSqrtn} are obtained by showing that
non-adaptive algorithm with small query complexity cannot reliably distinguish
$O(\frac1{\sqrt{n}})$-regular LTFs that are monotone from those that are
far from monotone.

Chen, De, Servedio, and Tan~\cite{chen:booleanMonotonicityRequiresSqrtn} asked
if their approach could be generalized to obtain polynomial lower bounds on the 
query complexity of adaptive testers for monotonicity. We answer this question in
the negative, by showing that there does exist an adaptive algorithm
with logarithmic query complexity that can $\epsilon$-test monotonicity 
when its input is promised to be a regular LTF.

\begin{theorem}
\label{thm:regular-LTFs}
	Fix $\epsilon > 0$ and $\tau > 0$.
	There is an adaptive algorithm $\cA$ with query complexity%
	\footnote{\label{foot:middle} In fact, we can restrict $\cA$ to only query the value of the function
	on inputs from the middle layers of the hypercube, 
	so it also $\epsilon$-tests \emph{truncated} regular LTFs for monotonicity. 
	See \rf(defn:truncate) for the definition of truncation, and \rf(sec:algBalance) for more details.}
	$O_{\eps,\tau}(1) + \log n$ that, given oracle access to the $n$-variate Boolean function $f$,
	\begin{enumerate}
		\vspace{-5pt}
	 	\item Always accepts when $f$ is a monotone $\frac\tau{\sqrt{n}}$-regular LTF, and
	 	\vspace{-10pt}
	 	\item Rejects with probability at least $\frac12$ when $f$ is a $\frac\tau{\sqrt{n}}$-regular LTF that is $\epsilon$-far from monotone.
	\end{enumerate} 
\end{theorem}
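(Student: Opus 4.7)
My plan is a two-phase adaptive algorithm: a \emph{locate} phase of $O(\log n)$ queries that binary-searches along a random chain to find the LTF's transition, and a \emph{scan} phase of $O_{\eps,\tau}(1)$ queries plus, if needed, one more binary search of $O(\log n)$ queries to isolate a violating edge. All queries are confined to middle layers, so the algorithm also tests truncated regular LTFs.

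Concretely, I sample a uniformly random maximal chain $x^{(0)} \prec x^{(1)} \prec \cdots \prec x^{(n)}$ from a random permutation of the coordinates. After short preliminary queries to handle degenerate cases, I binary-search within the chain for some edge layer $k^\star$ with $f(x^{(k^\star)}) = 0$ and $f(x^{(k^\star+1)}) = 1$, using $O(\log n)$ queries. I then query $f$ at $M = M(\eps, \tau) = O_{\eps,\tau}(1)$ randomly chosen middle layers inside a window of width $C(\eps, \tau)\sqrt{n}$ around $k^\star$. If any such layer $k > k^\star$ satisfies $f(x^{(k)}) = 0$, or symmetrically $k < k^\star$ with $f(x^{(k)}) = 1$, I binary-search between it and $k^\star$ to isolate a decreasing edge and reject; otherwise I accept. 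Total query cost is $O_{\eps,\tau}(1) + O(\log n)$, and completeness is immediate because $f(x^{(k)})$ is non-decreasing along any chain when $f$ is monotone.

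The substance is the soundness argument, which I would carry out in three reductions. First, a \textbf{structural lemma}: writing $f(x) = \sgn(w \cdot x - \theta)$ and setting $w^-_i = \max(-w_i, 0)$, any $\tau/\sqrt{n}$-regular LTF that is $\eps$-far from monotone satisfies $\|w^-\|_2 \ge c(\eps, \tau) \|w\|_2$. I would prove this by comparing $f$ to its monotonization $g(x) = \sgn(|w| \cdot x - \theta)$ and using the Berry--Esseen CLT for $\tau/\sqrt{n}$-regular LTFs to re-express $\Pr[f \ne g]$ as a Gaussian probability governed by the ratio $\|w^-\|_2 / \|w\|_2$, which forces the ratio to be bounded below. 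Second, a \textbf{random-walk analysis}: viewing $S_k = w \cdot x^{(k)}$ as a random walk with increments $w_{\pi(1)}, w_{\pi(2)}, \ldots$ in uniformly random order, regularity and the structural lemma together force the walk to cross $\theta$ in the ``wrong'' (downward) direction with constant probability somewhere in a $\Theta(\sqrt n)$-window around the expected transition layer. Third, a \textbf{sampling analysis}: conditional on such a downward crossing existing in the window, $M = M(\eps, \tau)$ uniformly random samples hit a $1$-valued layer strictly before a $0$-valued layer with constant probability, by a direct first-and-second-moment computation.

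The main obstacle is the walk analysis. One must convert the $L_2$-level bound on $\|w^-\|_2$ into a robust constant-probability guarantee of a downward crossing, uniformly over all admissible weight vectors and thresholds --- including regimes where $\sum_i w_i$ is either very close to or very far from $\theta$, and where the positive and negative parts of $w$ are highly unbalanced. My proof would combine the joint CLT for $S_k$ at two nearby middle layers with anti-concentration of $S_k$ near $\theta$, and would need to be careful enough that the scan window size $C\sqrt n$ and sample count $M$ depend only on $\eps$ and $\tau$ and not on the unknown weight vector.
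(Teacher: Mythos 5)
Your algorithm is genuinely different from the paper's, even though both are bisection-flavoured. The paper (in its Modified Bisection Algorithm) starts from two uniformly random endpoints $x,y$ with $f(x)\ne f(y)$, runs the randomized bisection process only until the symmetric difference $S_k$ has size $O_{\eps,\tau}(1)$, and then \emph{exhaustively} queries all $2^{|S_k|}$ hybrids; correctness comes from the combinatorial fact (their Proposition on non-monotone LTFs) that an LTF restricted to a small subcube is non-monotone as soon as the negative weights in that subcube outweigh the max weight, together with a bisection-process lemma showing $\Omega(1)$ of the $\Omega(n)$ ``substantial'' negative coordinates survive the bisection. Your structural lemma lower-bounding $\|w^-\|_2$ is essentially their Proposition \ref{prop:regular-far-monotone}, so that part matches.

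The gap is in the detection step. You bisect all the way down to a single transition edge $k^\star$ and then sample $M=O_{\eps,\tau}(1)$ layers uniformly from a window of width $\Theta(\sqrt n)$. That fails in the drifting regime. Take $w_i = 1/\sqrt n$ for $2n/3$ coordinates and $w_i = -1/\sqrt n$ for $n/3$ coordinates, $\theta=0$; this is $O(1/\sqrt n)$-regular and $\Omega(1)$-far from monotone, yet $\sum_i w_i = \Theta(\sqrt n)$. Along a random chain, the walk $S_k = w\cdot x^{(k)}$ has per-step drift $\Theta(1/\sqrt n)$ and per-step standard deviation $\Theta(1/\sqrt n)$, so within $O(1)$ layers past $k^\star$ the drift dominates the fluctuations and $\Pr[S_{k^\star+L}<0]$ decays exponentially in $L$. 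Hence the expected number of ``wrong'' layers (layers $k>k^\star$ with $f(x^{(k)})=0$) in your window is $O(1)$, and $M=O(1)$ uniform samples from a $\Theta(\sqrt n)$-window detect one with probability only $O(M/\sqrt n)\to 0$. No first-and-second-moment argument can save a sampling scheme when the objects to be found are $O(1)$ out of $\Theta(\sqrt n)$. (Your window analysis is fine in the near-balanced regime $\sum_i w_i=O(1)$, where the walk is locally near-driftless and the wrong layers occupy a constant fraction of a $\sqrt n$-window; it is exactly the large-$\sum_i w_i$ case you flagged as a concern that breaks it.) The fix that mirrors the paper is to stop bisecting once the remaining set has $O_{\eps,\tau}(1)$ elements and then query that whole subcube; querying only the $O_{\eps,\tau}(1)$ chain layers adjacent to $k^\star$ is cheaper but would need a separate argument, since the chain exposes only one edge per pair of adjacent layers while the subcube exposes all of them.
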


Combined with the lower bound of Chen et al.~\cite{chen:booleanMonotonicityRequiresSqrtn}, Theorem~\ref{thm:regular-LTFs} shows that
there are natural classes of functions for which adaptivity can reduce the query complexity of monotonicity testers by an exponential amount. By the standard reduction between adaptive and non-adaptive algorithms, this is best possible.

The proof of Theorem~\ref{thm:regular-LTFs} is obtained by introducing a new
adaptive tester for monotonicity. The algorithm is quite natural: it selects pairs
of inputs $x,y \in \cube$ independently at random until it finds a pair for 
which $f(x) \neq f(y)$, then it performs a random binary search between $x$ and $y$ to identify an edge $(z,z')$ of the hypercube on which $f(z) \neq f(z')$.
The algorithm accepts if and only if $f$ is monotone on this edge.
To the best of our knowledge, this algorithm is the first randomized adaptive tester for monotonicity on the hypercube to be analyzed. (See, for example, the discussions in~\cite[\S 1.5]{khot:monotonicityTesting} and in~\cite{canonne:OpenProblem}.)



\paragraph{Organization.}
We discuss the proofs of Theorems~\ref{thm:main} and~\ref{thm:regular-LTFs} at a 
high-level in Section~\ref{sec:intuition}, after introducing preliminary facts
and terminology. The complete proofs follow
in Sections~\ref{sec:lower} and~\ref{sec:upper}, respectively.

\section{Preliminaries}
\label{sec:preliminaries}



\subsection{Probability Theory}

We use standard concentration inequalities.

\begin{lemma}[Hoeffding's inequality]
 Let $w \in \R^n$ be any real-valued vector. 
 Then for any $t > 0$,
 when $X_1,\ldots,X_n$ are independent random variables taking the values
 $+1$ and $-1$ with probability $\frac12$ each,
 $$
 \Pr\skC[ \absB|\sum_{i \in [n]} w_i X_i| > t] \le 2\ee^{-\frac{t^2}{2\|w\|_2^2}},
 $$
where $\norm|w|_2 = \sqrt{\sum_{i\in[n]} w_i^{\,2}}$ is the $\ell_2$-norm.
\end{lemma}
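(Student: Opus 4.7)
The plan is to apply the standard Chernoff--Cram\'er exponential moment method. Writing $S = \sum_{i\in[n]} w_i X_i$, I would first observe that by symmetry of the Rademacher distribution, $S$ and $-S$ have the same law, so the two-sided bound follows from bounding $\Pr[S > t]$ and then doubling. So the main task is to control the upper tail.

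For any $\lambda > 0$, Markov's inequality applied to the monotone transformation $s \mapsto \ee^{\lambda s}$ gives
$$
\Pr[S > t] \;\le\; \ee^{-\lambda t}\, \E\bigl[\ee^{\lambda S}\bigr] \;=\; \ee^{-\lambda t}\, \prod_{i\in[n]} \E\bigl[\ee^{\lambda w_i X_i}\bigr],
$$
where the last equality uses the independence of the $X_i$. The core step is then the single-variable moment generating function bound
$$
\E\bigl[\ee^{\lambda w_i X_i}\bigr] \;=\; \cosh(\lambda w_i) \;\le\; \ee^{\lambda^2 w_i^{\,2}/2},
$$
which I would prove by comparing the Taylor series of the two sides term by term: $\cosh(x) = \sum_k x^{2k}/(2k)!$ and $\ee^{x^2/2} = \sum_k x^{2k}/(2^k\,k!)$, and $(2k)! \ge 2^k k!$. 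Multiplying these bounds over $i$ gives $\E[\ee^{\lambda S}] \le \ee^{\lambda^2 \|w\|_2^2/2}$, hence $\Pr[S > t] \le \exp\!\bigl(-\lambda t + \tfrac12 \lambda^2 \|w\|_2^2\bigr)$.

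Finally I would optimize in $\lambda$: the right-hand side is a quadratic in $\lambda$ minimized at $\lambda = t/\|w\|_2^2 > 0$, yielding the one-sided bound $\Pr[S > t] \le \ee^{-t^2/(2\|w\|_2^2)}$. Combining with the corresponding lower-tail bound via the symmetry observation produces the factor of $2$ and completes the proof. There is no real obstacle here; the only non-mechanical step is the sub-Gaussian MGF estimate $\cosh(x) \le \ee^{x^2/2}$, and even that is a one-line Taylor-series comparison.
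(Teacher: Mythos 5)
Your proof is correct and is the standard Chernoff--Cram\'er argument for Hoeffding's inequality for Rademacher sums: symmetrize, apply Markov to $\ee^{\lambda S}$, bound the single-variable MGF via $\cosh(x) \le \ee^{x^2/2}$, and optimize $\lambda$. The paper, however, states this lemma as a standard preliminary fact and gives no proof of its own, so there is nothing to compare against; your derivation would be a perfectly appropriate filler if one were wanted.
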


\begin{lemma}[Bernstein's inequality]
Consider a set of $n$ independent random variables $X_1,\dots,X_n$, where $-1\le X_i\le 1$ for all $i$.
Let $X = \sum_{i\in[n]} X_i$.
Then, for all $0 < t<\Var[X]$, we have
\[
\Pr\skB[{\absA|X - \bE[X]| > t }] \le 2 \ee^{-\frac{t^2}{4\Var[X]}}.
\]
\end{lemma}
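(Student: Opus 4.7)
The plan is to carry out the standard Chernoff/moment-generating-function argument, choosing elementary inequalities carefully so that the exponent's constant $1/4$ comes out exactly.

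First I would center the variables by setting $Y_i = X_i - \bE[X_i]$, so that the $Y_i$ are independent with mean zero, satisfy $|Y_i| \le 2$, and satisfy $\sum_i \Var[Y_i] = \Var[X]$. Let $Y = \sum_i Y_i = X - \bE[X]$. By applying the argument below to both $Y$ and $-Y$ and taking a union bound, it suffices to prove $\Pr[Y > t] \le \ee^{-t^2/(4\Var[X])}$.

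For each $i$ and each $\lambda > 0$, expanding the exponential, using that the linear term in $Y_i$ vanishes by centering, and applying the crude bound $|Y_i|^k \le 2^{k-2}\,Y_i^2$ for $k \ge 2$ yields
\[
\bE[\ee^{\lambda Y_i}] \;=\; 1 + \sum_{k \ge 2} \frac{\lambda^k \bE[Y_i^k]}{k!} \;\le\; 1 + \frac{\Var[Y_i]}{4}\bigl(\ee^{2\lambda} - 1 - 2\lambda\bigr).
\]
Restricting to $\lambda \in (0, 1/2]$ so that $2\lambda \in [0,1]$, the elementary estimate $\ee^x - 1 - x \le x^2$ on $[0,1]$ simplifies this to $\bE[\ee^{\lambda Y_i}] \le 1 + \lambda^2 \Var[Y_i] \le \exp(\lambda^2 \Var[Y_i])$. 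Multiplying over the independent $Y_i$'s gives $\bE[\ee^{\lambda Y}] \le \exp(\lambda^2 \Var[X])$.

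A Markov-style step then yields $\Pr[Y > t] \le \exp(\lambda^2 \Var[X] - \lambda t)$ for every $\lambda \in (0, 1/2]$. The unconstrained minimum over $\lambda$ is attained at $\lambda^\star = t/(2\Var[X])$, and the constraint $\lambda^\star \le 1/2$ is equivalent to $t \le \Var[X]$, which holds by the stated hypothesis $t < \Var[X]$. Substituting $\lambda^\star$ gives $\Pr[Y > t] \le \exp(-t^2/(4\Var[X]))$, and combining with the symmetric lower-tail bound via a union bound produces the factor $2$.

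There is no serious obstacle here, but two spots warrant care. First, the bound $\ee^x - 1 - x \le x^2$ on $[0,1]$ (rather than a weaker $\le c\,x^2$ with $c>1$) is precisely what delivers the constant $1/4$ in the exponent; a cruder Taylor estimate would loosen it. Second, the hypothesis $t < \Var[X]$ is used in exactly one place, namely to keep the Chernoff optimizer $\lambda^\star = t/(2\Var[X])$ inside the window $(0,1/2]$ on which the per-variable MGF bound holds.
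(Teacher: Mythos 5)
Your proof is correct. Note that the paper states Bernstein's inequality as a standard preliminary and gives no proof of its own, so there is nothing to compare against; your Chernoff/moment-generating-function argument is the standard route, and the two delicate points are handled properly: the estimate $\ee^x - 1 - x \le x^2$ does hold on $[0,1]$ (its second derivative check goes through since the first derivative $2x+1-\ee^x$ stays nonnegative there), which yields exactly the constant $\frac14$, and the hypothesis $t < \Var[X]$ is used precisely where it must be, to keep the optimizer $\lambda^\star = t/(2\Var[X])$ in the window where the per-variable bound is valid.
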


We also use an anti-concentration inequality that follows directly from
the Berry--Ess{\'e}en theorem. (See, e.g.,~\cite{odonnell:ChowParameters}.)

\begin{lemma}[Berry--Ess{\'e}en corollary]
\label{lem:berry-esseen-cor}
Fix $\tau > 0$. Let $w \in \R^n$ be any real-valued vector that satisfies
$\max_j |w_j| \le \tau \|w\|_2$. Then for any $a < b \in \R$,
when $X_1,\ldots,X_n$ are independent random variables taking the values
$+1$ and $-1$ with probability $\frac12$ each,
$$
\Pr\skC[ a \le \sum_{i \in [n]} w_i X_i \le b] \le \frac{b-a}{\|w\|_2} + 2 \tau.
$$
\end{lemma}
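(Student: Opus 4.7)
\medskip
\noindent\textbf{Proof plan.} The plan is to reduce the statement to a direct application of the Berry--Ess\'een theorem, using the regularity hypothesis to control the third-moment correction term. Set $S = \sum_{i\in[n]} w_i X_i$ and $\sigma = \|w\|_2$. Since each $X_i$ is an independent Rademacher variable, we have $\bE[S] = 0$ and $\Var[S] = \sigma^2$, and the third absolute moments of the summands satisfy
\[
\sum_{i\in[n]} \bE\bigl[|w_i X_i|^3\bigr] \;=\; \sum_{i\in[n]} |w_i|^3 \;=\; \sum_{i\in[n]} |w_i|\cdot w_i^{\,2} \;\le\; \tau \|w\|_2 \cdot \sum_{i\in[n]} w_i^{\,2} \;=\; \tau \sigma^3,
\]
where the inequality uses the regularity assumption $\max_j |w_j| \le \tau \|w\|_2$.

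Next I would invoke the classical Berry--Ess\'een theorem, which yields an absolute constant $C < 1$ such that the CDF $F$ of $S/\sigma$ satisfies $\sup_{x\in\R} |F(x) - \Phi(x)| \le C\tau$, where $\Phi$ is the standard normal CDF. Writing $\Pr[a \le S \le b] = F(b/\sigma) - F((a/\sigma)^-)$ and adding and subtracting the corresponding Gaussian probability gives
\[
\Pr[a \le S \le b] \;\le\; \Phi(b/\sigma) - \Phi(a/\sigma) + 2C\tau.
\]
The Gaussian gap is bounded by the maximum density times the interval length: $\Phi(b/\sigma) - \Phi(a/\sigma) \le \frac{1}{\sqrt{2\pi}} \cdot \frac{b-a}{\sigma} \le \frac{b-a}{\|w\|_2}$. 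Combining these two estimates and using $2C < 2$ yields exactly the claimed bound $\frac{b-a}{\|w\|_2} + 2\tau$.

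\medskip
\noindent\textbf{Potential obstacle.} The only genuinely nontrivial ingredient is Berry--Ess\'een itself, which is imported as a black box. Everything else is routine: the regularity hypothesis is tailor-made so that the cubic moment bound $\sum |w_i|^3 \le \tau\sigma^3$ falls out in one line, and the density of the standard normal is conveniently at most $1$ so that the cleaner constants $1$ and $2$ in the stated inequality absorb both $1/\sqrt{2\pi}$ and the Berry--Ess\'een constant without further effort.
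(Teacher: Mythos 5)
Your proof is correct and is exactly the standard derivation the paper is alluding to when it says this corollary ``follows directly from the Berry--Ess\'een theorem'' and points to~\cite{odonnell:ChowParameters}: compute the normalized third-moment sum, observe that regularity gives $\sum_i |w_i|^3 \le \tau\|w\|_2^3$, apply Berry--Ess\'een to replace the CDF of $S/\sigma$ by $\Phi$ up to error $C\tau$ at each endpoint, and bound the Gaussian interval probability by $1/\sqrt{2\pi}$ times its length. The paper offers no proof of its own, so there is nothing to diverge from; your write-up fills in precisely the routine calculation the authors declined to reproduce, and the constants go through with room to spare (the known Berry--Ess\'een constant for non-identically distributed summands is below $0.6$, and $1/\sqrt{2\pi} < 1$).
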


\subsection{Property testing lower bounds}
\label{sec:testingLBs}

\mycommand{yes}{\mathsf{Yes}}
\mycommand{no}{\mathsf{No}}

Theorem~\ref{thm:main} is established via a standard lemma
concerning the general setting where $\cP$ and $\cN$ are two disjoint 
families of $n$-variate Boolean functions, 
an algorithm is given oracle access to a function $f\in\cP\cup\cN$, 
and its task is to determine whether $f\in\cP$ or $f\in\cN$.
The following lemma is essentially folklore---see, e.g.,~\cite{fischer:art} for usage in property testing and~\cite{razborov:disjointness} for a related lemma. 
We include a short proof for completeness.

\begin{lem}
\label{lem:adaptiveLB}
Let $\yes$ and $\no$ be probability distributions on $n$-variate Boolean functions satisfying
\[
\Pr_{f\sim\yes}\skA[f\in\cP] = 1
\qquad\text{and}\qquad
\Pr_{g\sim\no}\skA[g\in\cN] = \Omega(1).
\]
If $q$ is a positive integer such that for any sequences $x_1,\dots,x_q\in\cube$ and $b_1,\dots,b_q \in\bool$, 
\begin{equation}
\label{eqn:adaptiveLB}
\Pr_{f\sim\yes} \skB[\forall i\colon f(x_i) = b_i]
\le 
(1+o(1)) \Pr_{g\sim\no} \skB[\forall i\colon g(x_i) = b_i] + o(2^{-q}),
\end{equation}
then any randomized algorithm that decides whether $f\in \cP$ or $f\in \cN$ makes $\Omega(q)$ queries to $f$.
\end{lem}

\pfstart
Let $\cA$ be a randomized decision tree that distinguishes $\cP$ from $\cN$.
Denote $p = \Pr_{g\sim\no}\skA[g\in\cN] = \Omega(1)$.
With a constant number of repetitions of $\cA$, we may assume that $\cA$ accepts any function $f\in\cP$ with probability at least $1 - p/2$, and accepts each $g\in\cN$ with probability at most $1/3$.
Then,
\[
\Pr_{f\sim\yes} \skA[\text{$\cA$ accepts on $f$}] \ge 1-\frac p2
\qquad
\text{and}
\qquad
\Pr_{g\sim\no} \skA[\text{$\cA$ accepts on $g$}] \le (1-p)+\frac p3 = 1-\frac{2p}3,
\]
Assume towards a contradiction that $\cA$ makes at most $q$ queries.  As $\cA$ is a probability distribution on deterministic decision trees, there exists a decision tree $\cD$ of depth at most $q$ such that
\begin{equation}
\label{eqn:LB1}
\Pr_{f\sim\yes} \skA[\text{$\cD$ accepts on $f$}] - \Pr_{g\sim\no} \skA[\text{$\cD$ accepts on $g$}] \ge \frac p6.
\end{equation}

Without loss of generality, we may assume that every leaf of $\cD$ is at depth exactly $q$.
Let $L$ denote the set of leaves of $\cD$.  Each leaf $\ell\in L$ is characterized by two sequences $x_1,\dots,x_q\in\cube$ and $b_1,\dots,b_q \in\bool$ such that $\cD$ ends its work in $\ell$ on $f$ iff $f(x_i)=b_i$ for all $i$.  Let $L_1\subseteq L$ be the set of leaves on which $\cD$ accepts.  Then, by~\rf(eqn:adaptiveLB),
\begin{align*}
\Pr_{f\sim\yes} \skA[\text{$\cD$ accepts on $f$}]
&=\sum_{\ell\in L_1} \Pr_{f\sim\yes} \skA[\text{$\cD$ terminates in $\ell$ on $f$}]\\
&\le \sA[1+o(1)] \sum_{\ell\in L_1} \Pr_{g\sim\no} \skA[\text{$\cD$ terminates in $\ell$ on $g$}] + o\sA[|L_1|2^{-q}]\\
&= \Pr_{g\sim\no} \skA[\text{$\cD$ accepts on $g$}] + o(1),
\end{align*}
contradicting~\rf(eqn:LB1).
Hence, $\cA$ makes $\Omega(q)$ queries.
\pfend

\mycommand{truncate}{\mathrm{Truncate}}
The following operation is often useful in lower bounds on monotonicity on the hypercube.
It essentially reduces monotonicity testing on the whole hypercube to monotonicity testing on its middle layers.

\begin{definition}
\label{defn:truncate}
	For $\delta > 0$, the \emph{$\delta$-truncation} of the function
 	$f\colon\cube\to\bool$ is the function $\truncate_\delta(f)$ defined by
\[
x\mapsto
\begin{cases}
0,& \text{if $\abs|x|< \frac n2 - \delta\sqrt n$;}\\
f(x),& \text{if $\frac n2 - \delta\sqrt n\le |x|\le \frac n2 + \delta\sqrt n$;}\\
1,& \text{if $|x|> \frac n2 + \delta\sqrt n$;}
\end{cases}
\]
\end{definition}

When $f$ is monotone, then $\truncate_\delta(f)$ is also monotone.
Furthermore, for every $\eps>0$, there exists $\delta>0$ such that $\truncate_\delta(f)$ is $\frac\eps2$-far from monotone whenever $f$ is $\eps$-far from monotone.
Note that it only makes sense to query $\truncate_\delta(f)$ on the inputs $x\in\cube$ satisfying $|x| = \frac n2\pm O(\sqrt n)$, since otherwise the response is known in advance.  We call such inputs \emph{nearly balanced}.

\subsection{Linear Threshold Functions}

In studying linear threshold functions, it is more convenient to assume that the function is of the form $f\colon\sbool^n\to\sbool$.

\begin{definition}[LTF]
	The function $f\colon \{-1,1\}^n \to \{-1,1\}$ is a \emph{linear threshold function}
	(alternatively: \emph{LTF}, or \emph{halfspace}) with associated \emph{weights}
	$w_1,\ldots,w_n \in \R$ and \emph{threshold} $\theta$ if it satisfies
	$$
	f(x) = \sgn \sC[\sum_{i=1}^n w_i x_i - \theta ]
	$$
	for every $x \in \{-1,1\}^n$ where $\sgn$ is the sign function defined by $\sgn(x)={\bf 1}[x\ge 0]$.
\end{definition}


\begin{definition}[Regular LTFs]
	The LTF $f\colon \{-1,1\}^n \to \{-1,1\}^n$ is \emph{$\tau$-regular} 
	if it can be represented with a set of weights $w_1,\ldots,w_n$ 
	that satisfy
	$
	\max_{i \in [n]} |w_i| \le \tau \cdot \sqrt{\sum\nolimits_{i=1}^n w_i^{\,2}}.
	$
\end{definition}

\subsection{Noise Sensitivity and Talagrand's Random DNFs}
\label{sec:talagrand}
\mycommand{NS}{\mathrm{NS}}
\mycommand{Tal}{\mathsf{Tal}}
Let $\cB(n, \delta)$ be the probability distribution on the subsets of~$[n]$, in which each element is included in the subset independently with probability~$\delta$.

\begin{definition}
The \emph{noise sensitivity} of a function $f\colon\cube\to\bool$ at noise rate~$\delta$ is
\[
\NS_\delta(f) = \Pr_{x\sim\cube,\; S\sim\cB(n,\delta)} \sk[f(x)\ne f(x^S)],
\]
where $x^S$ denotes the input string $x$ with the variables in $S$ flipped.
\end{definition}

Talagrand's random DNF
 on $n$ variables~\cite{talagrand:randomCNF} is a disjunction of $2^{\sqrt{n}}$ independent random clauses of size $\sqrt{n}$.%
\footnote{
Talagrand's original definition was for random \emph{CNF}s.
However, DNFs are more convenient than CNFs for our intended applications, and all the results about CNFs easily carry over to the DNF case by duality.
}
More precisely, let $\cC$ be the uniform probability distribution on functions $C\colon [\sqrt{n}]\to[n]$.
We identify each $C$ in $\cC$ with the Boolean function $f_C\colon \cube\to\bool$ given by 
$
f_C(x) = \bigwedge_{a\in [\sqrt{n}]} x_{C(a)}.
$
\emph{Talagrand's random DNF} $f$ is then defined as 
\[
f(x) = \bigvee_{j\in [2^{\sqrt{n}}]} f_{C_j}(x),
\]
where each clause $C_j$ is independently sampled from $\cC$.
Let us denote the distribution of $n$-variate Talagrand's random DNF by $\Tal$.

One of the particularly useful characteristics of Talagrand's random DNF is that it  is one of the most noise-sensitive monotone functions, as shown by the following result.%
\footnote{
Mossel and O'Donnell only postulate the existence of one such function $f$.
However, \rf(thm:talagrand) easily follows from the equation before the Proof of Theorem 3 in Section 4 of~\cite{mossel:noiseSensitivityMonotone}.
}

\begin{thm}[Mossel-O'Donnell~\cite{mossel:noiseSensitivityMonotone}]
\label{thm:talagrand}
Talagrand's random DNF $f$ satisfies $\NS_{1/\sqrt{n}}(f) = \Omega(1)$ with probability $\Omega(1)$.
\end{thm}

\section{High-level overview and intuition}
\label{sec:intuition}


\subsection{The Bisection Algorithm and Regular LTFs}

The intuition behind the proofs of Theorems~\ref{thm:main} and~\ref{thm:regular-LTFs}
is best described by first examining the previous non-adaptive query complexity
lower bounds of Chen \etal~\cite{chen:newAlgorithmsLowerBoundsMonotonicity, chen:booleanMonotonicityRequiresSqrtn}. In these lower bounds, two
distributions $\cD_\yes$ and $\cD_\no$ over a finite set of weights are defined
under the two constraints that
\begin{enumerate}
	\item Every weight in the support of $\cD_\yes$ is non-negative, and
	\vspace{-8pt}
	\item A weight $w \sim \cD_\no$ is negative with constant probability.
\end{enumerate}
Two distributions $\yes$ and $\no$ over $n$-variate LTFs are defined by 
drawing weights $w_1,\ldots,w_n$ independently at random from the distributions
$\cD_\yes$ and $\cD_\no$, respectively, and then by letting
\[
f(x_1,\dots,x_n) = \sgn(w_1x_1 + \cdots + w_n x_n).
\]
Since $\cD_\yes$ and $\cD_\no$ are over finite domains (of size independent of $n$),
the resulting function $f$ is always an $O(\frac1{\sqrt n})$-regular LTF~\cite[Claim B.2]{chen:booleanMonotonicityRequiresSqrtn}.
Furthermore, the functions drawn from $\cD_\yes$ are always monotone, and
the functions drawn from $\cD_\no$ are $\Omega(1)$-far from monotone with
large probability~\cite[Theorem B.9]{chen:booleanMonotonicityRequiresSqrtn}. Thus, we have the following consequence:

\begin{thm}[\chenDe~\cite{chen:booleanMonotonicityRequiresSqrtn}]
For each $\delta>0$, there exist $\eps,\tau=\Theta(1)$ such that $\Omega(n^{1/2-\delta})$ non-adaptive nearly balanced queries are required to $\eps$-test $\frac\tau{\sqrt n}$-regular LTFs for monotonicity.
\end{thm}

Regular LTFs are used in the proofs of~\cite{chen:newAlgorithmsLowerBoundsMonotonicity, chen:booleanMonotonicityRequiresSqrtn} because
with suitable weight distributions $\cD_\yes$ and $\cD_\no$, 
appropriate central limit theorems can be used to bound the query complexity of
non-adaptive algorithms.  
Regular LTFs, however, also have one other 
notable characteristic: when a $O(\frac1{\sqrt{n}})$-regular LTF is far from monotone, then a \emph{constant} fraction of the edges
$x \preceq y$ of the hypercube on which $f(x) \neq f(y)$ are edges where
$f(x) > f(y)$ and are thus witnesses to the non-monotonicity of $f$.

This observation suggests a natural approach for testing monotonicity
of regular LTFs: draw an edge $x \preceq y$ uniformly at random from the set
of edges where $f(x) \neq f(y)$, and test whether $f$ is monotone on this edge.
While we unfortunately do not know of any query-efficient algorithm for drawing edges 
from this distribution, we do know of one way to at least guarantee that we
return \emph{some} edge $x \preceq y$ on which $f(x) \neq f(y)$ using a
logarithmic number of queries when $f$ is not too biased.
A simple way to do this is described in the bisection algorithm below. In this algorithm, for $x,y\in\cube$, $\Hybrid(x,y)$ denotes the set of inputs $z\in\cube$ that satisfy $z_i = x_i$ for every index $i \in [n]$ where $x_i=y_i$.

\begin{algorithm}
	\caption{Bisection algorithm}
	\label{alg:bisection}
	\begin{algorithmic}[1]
		\State Draw $x,y \in \cube$ uniformly and independently at random 
			   until $f(x) = 0$ and $f(y) = 1$.
	    \State If $O(1/\epsilon)$ pairs are drawn without satisfying the condition,
		       {\bf accept}.
		\While{ $|\Hybrid(x,y)| > 2$ }
			\State Draw $z \in \Hybrid(x,y)$ uniformly at random.
			\State If $f(z) = 0$, update $x \gets z$.
			\State Otherwise if $f(z) = 1$, update $y \gets z$.
		\EndWhile
		\State If $x \preceq y$, {\bf accept}; otherwise {\bf reject}.
	\end{algorithmic}
\end{algorithm}

The proof of Theorem~\ref{thm:regular-LTFs} is completed by showing that 
a slight variant of this algorithm does indeed identify a non-monotone edge
with constant probability
when the input function is a regular LTF that is far from monotone.
Specifically, we consider the random process on subsets of $[n]$ defined
by the bisection algorithm and show that with constant probability, after
$\log n - \Theta(1)$ iterations of the while loop, the set
$\{i \in [n] : x_i \neq y_i\}$ has cardinality $O(1)$ and
contains some coordinates with negative weights.
The details are in \rf(sec:upper).


\subsection{Noise Sensitivity and Polynomial Lower Bound}
\label{sec:intuitionLower}

Theorem~\ref{thm:regular-LTFs} shows that we need other functions than regular
LTFs to prove a polynomial lower bound for adaptive monotonicity testing.
To find such functions, we can start by identifying functions that are
far from monotone but for which the bisection algorithm rejects only with small
probability.

On a function $f\colon\cube\to\bool$, the bisection algorithm ends its work in an edge $xy$ of the hypercube, where $f(x)\ne f(y)$.
Let us say in this case that the algorithm ends its work in variable $i$, where $i$ is the only variable where $x$ and $y$ differ.
Thus, on each $f$, the bisection algorithm defines the corresponding \emph{output probability distribution} on the variables in $[n]$.
Our first observation is that negating some input variables of a function does
not affect the output probability distribution of the bisection algorithm.

\begin{prp}
\label{prp:outputDistrib}
For each $f\colon\cube\to\bool$ and $S\subseteq[n]$, the output probability distributions on $[n]$ defined by the bisection algorithm on the functions $f$ and $g(x) = f(x^S)$ are identical.
\end{prp}

\pfstart
Let 
\[
(x_1,y_1), (x_2,y_2),\dots, (x_t, y_t)
\]
be a transcript of the bisection algorithm on the function $f$.  That is, $(x_i,y_i)$ is the value of $x$ and $y$ before the $i$th iteration of the loop in \rf(alg:bisection).
Then,
\[
(x_1^S,y_1^S), (x_2^S,y_2^S),\dots, (x_t^S, y_t^S)
\]
is an equiprobable transcript of the bisection algorithm on the function $g$, which ends its work in the same variable.
\pfend

Our next observation is that if we have a monotone function with large noise sensitivity, then negating a (small) random subset of the variables yields
a function that that is far from monotone with high probability.

\begin{lem}
\label{lem:noise->far}
Let $f\colon\cube\to\bool$ be a monotone function and $0<\delta<1$ be a real number.
Assume $\NS_{\delta}(f) = \Omega(1)$.
Then, with probability $\Omega(1)$ over the choice of $S\sim\cB(n,\delta)$, the function $g(x) = f(x^S)$ is $\Omega(1)$-far from being monotone.
\end{lem}

\pfstart
By the definition of noise sensitivity, 
\[
\Pr_{x\sim\cube,\; S\sim\cB(n,\delta)} \sk[f(x) \ne f(x^S)] = \Omega(1).
\]
By Markov's inequality, with probability $\Omega(1)$ over the choice of $S\sim\cB(n,\delta)$, we have
\begin{equation}
\label{eqn:noise1}
\Pr_{x\sim\cube} \sk[f(x)\ne f(x^S)] = \Omega(1).
\end{equation}
Let $g(x) = f(x^S)$ be defined for such an $S$, and let $D(g)$ denote the number of inputs on which we have to modify the value of $g$ in order to make it monotone.  We aim to estimate $D(g)$.\

Write $x=(y,z)$ with $y\in\bool^{[n]\setminus S}$ and $z\in\bool^S$.
For each $y$, consider the function $g_y(z) = g(y,z)$.
We have $D(g) \ge \sum_y D(g_y)$.
Next, each $g_y$ is anti-monotone.
This implies $D(g_y)\ge \min\{g_y^{-1}(0), g_y^{-1}(1)\}$.
We can lower bound the latter quantity by the number of pairs $\{z, z^S\}$ satisfying $g_y(z)\ne g_y(z^S)$.
Summing over all $y$, we get that $D(g)$ is at least the number of pairs $\{x,x^S\}$ satisfying $f(x) \ne f(x^S)$.
By~\rf(eqn:noise1), $g$ is $\Omega(1)$-far from being monotone.
\pfend

These observations, along with \rf(thm:talagrand), show that there are indeed
functions that are far from monotone but are rejected by the bisection algorithm
with only a small probability.

\begin{prp}
There exists a boolean function $g\colon\cube\to\bool$ that is $\Omega(1)$-far from being monotone, but such that the bisection algorithm rejects $g$ with probability only $O(1/\sqrt n)$.
\end{prp}

\pfstart
Let $f$ be a monotone functions satisfying $\NS_{1/\sqrt{n}}(f) = \Omega(1)$.
By \rf(thm:talagrand), a Talagrand random DNF satisfies this condition with
probability $\Omega(1)$.
Let $p_i$ be the output probability of variable $i\in [n]$ defined by the bisection algorithm on $f$.
Let $S\sim \cB(n,1/\sqrt n)$.  Then,
\[
\bE_{S}\skC[\sum_{i\in S} p_i] = \frac1{\sqrt n}.
\]
By Markov's inequality, and using \rf(lem:noise->far), there exists $S$ such that 
the function $g(x) = f(x^S)$ is $\Omega(1)$-far from being monotone, \emph{and}
$\sum_{i\in S} p_i = O(1/\sqrt n)$.
By \rf(prp:outputDistrib), the latter sum is exactly equal to the rejection probability of the bisection algorithm on the function $g$.
\pfend

This result shows that there are functions obtained by negating some variables in a Talagrand random DNF that are $\Omega(1)$-far from monotone, but such that the bisection algorithm requires $\Omega(\sqrt n)$ queries to detect that it is non-monotone.
The proof of \rf(thm:main) uses a very different approach---after all, there is no 
direct analogue of \rf(prp:outputDistrib) that can hold for all adaptive 
algorithms---but the underlying ideas are the same. 
We show that for \emph{any} set of $q \ll n^{1/4}$ queries, the distribution of the values returned by the monotone and the non-monotone Talagrand random DNFs are very similar.
After that, we can apply \rf(lem:adaptiveLB) to complete the proof.
The high-level intuition is as follows.  Consider two queries $x,y\in\cube$.
On the one hand, if $x$ and $y$ are far from each other, then, due to noise sensitivity, the values of $f(x)$ and $f(y)$ are essentially independent.  Hence, adaptivity does not help here.
On the other hand, if $x$ and $y$ are close, they are likely to miss the set $S$ of negated input variables.
More precisely, since there are at most $q^2\ll \sqrt{n}$ pairs of close inputs, a random set $S$ of $\sqrt n$ elements will avoid all of them with high probability.
For all the details, see \rf(sec:lower).

\section{Polynomial Lower Bound}
\label{sec:lower}

In this section, we prove \rf(thm:main).
Throughout this section we use $\cB = \cB(n,1/\sqrt n)$ to denote the probability distribution on subsets of $[n]$ where each element is included in the subset independently with probability $1/\sqrt{n}$.
Following the discussion in  \rf(sec:intuitionLower), let us define the distribution $\Tal^\pm$ of \emph{Talagrand's random non-monotone DNFs} as the following distribution on $n$-variate Boolean functions
\[
\Tal^\pm = \sfig{x\mapsto f(x^S)\midA f\sim\Tal,\; S\sim\cB}.
\]
We define two distributions for a sufficiently large constant $\delta>0$:
\[
\yes = \sfig{\truncate_\delta(f)\mid f\sim\Tal}
\qquad\text{and}\qquad
\no = \sfig{\truncate_\delta(f)\mid f\sim\Tal^\pm}.
\]
In view of \rf(lem:adaptiveLB), \rf(thm:talagrand) and \rf(lem:noise->far), it suffices to show that for all $q = O(n^{1/4}\log^{-2} n)$, nearly balanced input strings $x_1,\dots,x_q\in\cube$ and Boolean outcomes $b_1,\dots,b_q\in\bool$, we have
\begin{equation}
\label{eqn:patterns}
\Pr_{f\sim\Tal} \skB[\forall i\colon f(x_i) = b_i]
\le 
(1+o(1)) \Pr_{g\sim\Tal^\pm} \skB[\forall i\colon g(x_i) = b_i] + o(2^{-q}).
\end{equation}

\subsection{Proof of Theorem~\ref{thm:main}}
\label{sec:patternsproof}

Let us denote by $X = \{x_1,\dots,x_q\}$ the set of input strings.
All of them are nearly balanced.  In this section, we often identify strings in $\cube$ with the corresponding subsets of $[n]$.

For a fixed sequence of values $b_1,\dots,b_q$, the set $X$ can be naturally divided into
\[
X_0 = \sfigA{x_i\mid i\in[q], b_i=0}
\qquad\text{and}\qquad
X_1 = \sfigA{x_i\mid i\in[q], b_i=1}.
\]
Let $J = [2^{\sqrt n}]$ be the set of indices of the clauses in Talagrand's random DNF.
Recall that a function $f$ from the $\Tal$ distribution is given by 
\[
f(x) = \bigvee_{j\in J} C_j(x),
\]
where $(C_j)\sim \cC^J$ is a sequence of random clauses.
We call the sequence $(C_j)$ \emph{compliant with respect to the shift} $T\subseteq [n]$ iff the corresponding function $f$ satisfies $\forall i\in[q]\colon f(x_i^T) = b_i$.  
\mycommand{comp}{\cM^\emptyset}
\mycommand{compS}{\cM^S}
\mycommand{compT}{\cM^T}
We denote the set of such sequences by $\compT$.
The set $\comp$ corresponds to the events on the left-hand side of~\rf(eqn:patterns), and $\compS$ for $S\sim\cB$ corresponds to the right-hand side.


For $T\subseteq[n]$, we partition the set $\compT$ in accordance to when a clause in the sequence $(C_j)$ first satisfies each particular input $x\in X_1$.
Formally, for each sequence $\tau = (\tau_x) \in J^{X_1}$, we define $\compT_\tau$ as the set of sequences $(C_j)\in\compT$ satisfying
\negbigskip
\begin{equation}
\label{eqn:tcondition}
\parbox{.9\textwidth}{
\itemstart
\item for each $x\in X_0$, we have $f_{C_j}(x^T) = 0$ for all $j\in J$;
\item for each $x\in X_1$, we have $f_{C_j}(x^T) = 0$ for all $j<\tau_x$, and $f_{C_{\tau_x}}(x^T) = 1$.
\itemend
}
\negbigskip
\end{equation}
This clearly partitions the set $\compT$ into disjoint subsets.

The conditions imposed by~\rf(eqn:tcondition) on different $C_j$ are independent, thus, we can decompose $\compT_\tau$ into the following Cartesian product:
\begin{equation}
\label{eqn:compTdecomposition}
\compT_\tau = \prod_{j\in J} \compT_{\tau,j},
\end{equation}
where $\compT_{\tau,j}$ is the projection of $\compT_\tau$ onto the $j$th component of the sequence.
Let, for $j\in J$,
\[
X_{1,j} = \{x\in X_1 \mid \tau_x = j\}
\qquad\text{and}\qquad
X_{0,j} = X_0 \cup \sfig{x\in X_1 \mid j<\tau_x}.
\]
Thus, for each $j$, we have $C\in \compT_{\tau,j}$ if and only if $f_C(x^T) = 1$ for all $x\in X_{1,j}$ and $f_C(x^T)=0$ for all $x\in X_{0,j}$.

We say that a sequence $\tau$ is \emph{good} if 
\begin{equation}
\label{eqn:X1jassumpition}
\forall j\in J,\; \forall x,y\in X_{1,j}\colon |x\cap y|\ge \frac n2 - n^{3/4},
\end{equation}
Otherwise, we call it \emph{bad}.  We treat these two cases separately.

\begin{lem}
\label{lem:badtau}
We have
\begin{equation}
\label{eqn:badtau}
\Pr_{(C_j) \sim \cC^J} \skB[\text{\rm exists bad $\tau$ such that $(C_j)\in \comp_\tau$}] = o(2^{-q}).
\end{equation}
\end{lem}

\pfstart
It is easy to see that any $(C_j)$ satisfying the condition in~\rf(eqn:badtau) also satisfies
\begin{equation}
\label{eqn:cB}
\exists j\in J,\; \exists x,y\in X\colon \sB[|x\cap y|< \frac n2 - n^{3/4}]\wedge\sB[f_{C_j}(x) = f_{C_j}(y) = 1].
\end{equation}
By the union bound, the probability that~$(C_j)$ satisfies~\rf(eqn:cB) is at most 
\[
2^{\sqrt n} q^2 \s[\frac{\frac n2 - n^{3/4}}{n}]^{\sqrt{n}} 
= q^2 \s[1-2 n^{-1/4}]^{\sqrt n} 
\le q^2 \ee^{-2n^{1/4}} = q^2 \ee^{-\Omega(\log^2 n)}\cdot 2^{-q} = o(2^{-q}).\qedhere
\]
\pfend

Let us now consider good $\tau$.
In order to prove~\rf(eqn:patterns), it suffices to show that
\begin{equation}
\label{eqn:prodComparison}
\prod_{j\in J} \absA|\comp_{\tau,j}| 
\le \sA[1+o(1)] \prod_{j\in J} \absA|\compS_{\tau,j}| 
\end{equation}
with probability $1-o(1)$ over the choice of $S\sim\cB$.
Indeed, using~\rf(eqn:compTdecomposition), we get from~\rf(eqn:prodComparison) that
\begin{align*}
\sum_{\text{$\tau$ is good}} \absA|\comp_\tau| 
&\le \sA[1+o(1)]\sum_{\text{$\tau$ is good}} \E_{S\sim\cB}\skA[|\compS_\tau|] \\
&\le \sA[1+o(1)] \E_{S\sim\cB}\skA[|\compS|]
= \sA[1+o(1)]\cdot |\cC^J| \Pr_{g\sim\Tal^\pm} \skB[\forall i\colon g(x_i) = b_i] .
\end{align*}
Hence, using also \rf(lem:badtau), we get
\begin{align*}
\Pr_{f\sim\Tal} \skB[\forall i\colon f(x_i) = b_i] &= 
\frac1{|\cC^J|}\sC[ \sum_{\text{$\tau$ is good}} \absA|\comp_\tau| + \sum_{\text{$\tau$ is bad}} \absA|\comp_\tau| ] \\
&\le  \sA[1+o(1)] \Pr_{g\sim\Tal^\pm} \skB[\forall i\colon g(x_i) = b_i] + o(2^{-q}).
\end{align*}

Let us consider~\rf(eqn:prodComparison) now.  The set of indices $J$ breaks down into two parts $J = J_1\cup J_0$, where 
\[
J_1 = \{\tau_x \mid x\in X_1\}
\qquad\text{and}\qquad
J_0 = J\setminus J_1.
\]
We prove~\rf(eqn:prodComparison) for the indices in $J_1$ and $J_0$ independently.
Indices in $J_0$ are easier to analyse because for them we have $X_{1,j}=\emptyset$.
On the other hand, we need a rather careful estimate since $|J_0|\approx 2^{\sqrt n}$.
Indices in $J_1$ are harder to analyse because for them, in general, both $X_{1,j}$ and $X_{0,j}$ are non-empty.  
But since $|J_1|\le q< n^{1/4}$, a less accurate estimate suffices.

For $J_1$, we have the following lemma, which is proven in \rf(sec:one).
\begin{lem}
\label{lem:one}
Assume $\tau$ is good. Then, for a set $S\sim\cB$, with probability $1-o(1)$, we have 
\begin{equation}
\label{eqn:one}
\prod_{j\in J_1} |\comp_{\tau,j}| 
\le \sA[1+o(1)] \prod_{j\in J_1} |\compS_{\tau,j}|.
\end{equation}
\end{lem}

For $J_0$, we have the following lemma, which we prove in \rf(sec:zero):
\begin{lem}
\label{lem:zero}
A set $S\sim\cB$ satisfies the following property with probability $1-o(1)$:
For any subset $X'\subseteq X$, we have 
\begin{equation}
\label{eqn:zero}
\Pr_{C\sim\cC} \skB[ \exists x\in X'\colon f_C(x)=1 ]
\ge \sB[1-O(n^{-1/4}\log^{3/2}n)] 
\Pr_{C\sim\cC} \skB[ \exists x\in X'\colon f_C(x^S)=1 ].
\end{equation}
\end{lem}

Assume $S$ satisfies~\rf(eqn:one) and~\rf(eqn:zero).  For $j\in J_0$, let $p_j$ and $p'_j$ denote the probability in the left- and right-hand sides of~\rf(eqn:zero), respectively, when $X' = X_{0,j}$.  
By the union bound, and since all $x\in X$ are nearly balanced,
\[
p_j' \le \sum_{x\in X_{0,j}} \Pr\skB[ f_C(x^S)=1 ] \le |X_{0,j}| \s[\frac{\frac n2 + O(\sqrt n)}n]^{\sqrt n} = O(q2^{-\sqrt n}).
\]
Thus,
\begin{equation}
\label{eqn:J01}
\frac{|\comp_{\tau,j}|}{|\compS_{\tau,j}|} = \frac{1-p_j}{1-p'_j} 
\le \frac{1-\sA[1-O(n^{-1/4}\log^{3/2}n)]p'_j}{1-p'_j} = 1 + O\s[n^{-1/4}\log^{3/2}n\cdot q2^{- \sqrt n}]
= 1 + o\sA[2^{-\sqrt n}].
\end{equation}
Hence, taking the product of~\rf(eqn:J01) over all $j\in J_0$, we get that, with probability $1-o(1)$, a set $S\sim\cB$ satisfies:
\[
\prod_{j\in J_0} \frac{|\comp_{\tau,j}|}{|\compS_{\tau,j}|}
\le \s[1+o(2^{-\sqrt n})]^{2^{\sqrt n}} = 1+o(1).
\]
Multiplying this by~\rf(eqn:one), we obtain~\rf(eqn:prodComparison).

\subsection{A Simple Lemma}
\label{sec:simple}
In this section, we prove a simple lemma that will be used in the proofs of both Lemmata~\ref{lem:one} and~\ref{lem:zero}.
Let $\gamma = \omega(\sqrt n)$.
Define a graph $G$ on the vertex set $X$ defined in \rf(sec:patternsproof), where two vertices $x$ and $y$ are connected iff $|x\cap y| \ge n/2 - \gamma$.

\begin{lem}
\label{lem:intersections}
For every non-empty connected subset $A$ of vertices of $G$, we have
\begin{equation}
\label{eqn:capA}
\absC|\bigcap_{x\in A} x| \ge \frac n2 - O\sA[|A|\gamma]
\qquad\text{and}\qquad
\absC|\bigcup_{x\in A} x| \le \frac n2 + O\sA[|A|\gamma].
\end{equation}
\end{lem}

\pfstart
We prove the first equality in~\rf(eqn:capA), the second one being similar.
The proof is by induction on the size of $A$.
The base case $|A|=1$ follows from the fact that $x\in X$ is nearly balanced.

For the inductive step, take a vertex $y\in A$ such that $A\setminus\{y\}$ is connected.  Let $z$ be a neighbour of $y$ in $A\setminus\{y\}$.
By the inductive hypothesis,
\[
\absC|\bigcap_{x\in A\setminus\{y\}} x| \ge \frac n2 - O\sA[(|A|-1) \gamma].
\]
Also,
\[
\absC|\bigcap_{x\in A\setminus\{y\}} x| - \absC|\bigcap_{x\in A} x| \le |z| - |z\cap y| = O(\gamma),
\]
since $z$ is nearly balanced and $|y\cap z|\ge n/2 - \gamma$.  Combining the last two inequalities, we obtain~\rf(eqn:capA).
\pfend

\subsection{Proof of \rf(lem:one)}
\label{sec:one}
\mycommand{compO}{\comp_{\mathbf{1}}}
\mycommand{compSO}{\compS_{\mathbf{1}}}
\mycommand{compTO}{\compT_{\mathbf{1}}}
As we only work with $J_1$ in this section, let, for $T\subseteq[n]$,
\[
\compTO = \prod_{j\in J_1} \compT_j.
\]
This is the projection of $\compT$ from \rf(sec:patternsproof) onto the indices in $J_1$.
For $j\in J_1$, let us denote
\[
y_j = \bigcap_{x \in X_{j,1}} x,
\qquad\text{and}\qquad
z_j = \bigcup_{x \in X_{j,1}} x.
\]
Using \rf(lem:intersections) with $\gamma = n^{3/4}$, we get
\[
\frac n3 \le \frac n2 - O(q n^{3/4}) \le |y_j| \le \frac n2 + O\sA[n^{3/4}]
\qquad\text{and}\qquad
|z_j| \le \frac n2 + O\sA[|X_{1,j}|n^{3/4}],
\]
if $n$ is large enough.
We impose the following constraints on $S\sim\cB$:
\begin{equation}
\label{eqn:SoneCondition}
|S| = O(\sqrt n),
\quad
|S\cap y_j|\le \frac{\sqrt n}2 + O\s[n^{1/4}\sqrt{\log n}]
\quad\text{and}\qquad
|S\setminus z_j|\ge \frac{\sqrt n}2 - O\s[|X_{1,j}|n^{1/4}\sqrt{\log n}]
\end{equation}
for all $j\in J_1$.
For $S\setminus z_j$, we have
\[
\bE\skA[|S\setminus z_j|] \ge \frac{\sqrt n}2 - O\s[|X_{1,j}|n^{1/4}]
\qquad
\mbox{and}
\qquad
\Var\skA[|S\setminus z_j|] \le \sqrt n.
\]
And similar estimates can be obtained for $S\cap y_j$.  Applying Bernstein's inequality and the union bound, we get that a set $S\sim\cB$ satisfies~\rf(eqn:SoneCondition) with probability $1-o(1)$ if the $O$ factors in~\rf(eqn:SoneCondition) are large enough.
In the remaining part of this section, we assume that $S$ satisfies~\rf(eqn:SoneCondition).

For each $(C_j)\in\compO$, we have $C_j(a) \in y_j$ for each $j\in J_1$ and $a\in[\sqrt n]$.
Also, for each $j\in J_1$ and $x\in X_{0,j}$, there exists $a\in[\sqrt n]$ such that $C_j(a) \notin x$.  
We call the smallest such $a$ the \emph{pivotal index} corresponding to $j$ and $x$.
We call $C_j(a)$ the corresponding \emph{pivotal element}.

Let $S_j$ be an arbitrary subset of $S\cap y_j$ of size
\begin{equation}
\label{eqn:Sjsize}
|S_j| = \max\sfig{0, |S\cap y_j| - |S\setminus z_j|} = O\s[|X_{1,j}|n^{1/4}\sqrt{\log n}].
\end{equation}
We define two auxiliary subsets of $\cC^{J_1}$.

\itemstart
\item A sequence $(C_j)\in \compO$ is called \emph{half-restricted} if all its pivotal elements lie outside of $S$.  Denote the set of half-restricted $(C_j)$ by $\cH^S$.
\item A sequence $(C_j)\in \cH^S$ is called \emph{restricted} if for all $j\in J_1$ and $a\in[\sqrt n]$ we have $C_j(a)\notin S_j$.
Denote the set of restricted $(C_j)$ by $\cR^S$.
\itemend

\rf(lem:one) follows from the following three claims.

\begin{clm}
We have $\abs|\compSO|\ge \abs|\cR^S|$.
\end{clm}

\pfstart
This is achieved by \emph{shifting}: moving elements from $S\cap y_j$ to $S\setminus z_j$.
More precisely, let $\pi_j\colon S\cap y_j\setminus S_j\to S\setminus z_j$ be any injective mapping.  It exists due to~\rf(eqn:Sjsize).  
Define a mapping $\pi\colon \cR^S\to\compSO$ as $\pi\colon (C_j)\mapsto (C'_j)$, where
\[
C'_j(a) = 
\begin{cases}
\pi_j(C_j(a)),& \text{if $C_j(a) \in S\cap y_j\setminus S_j$;}\\
C_j(a),&\text{if $C_j(a) \in y_j\setminus S$.} 
\end{cases}
\]
It is clearly an injective mapping.  Also, its image is a subset of $\compSO$ since, 
\itemstart
\item $C'_j(a) \in x^S$ for each $j\in J_1$, $a\in[\sqrt n]$ and $x\in X_{1,j}$; and
\item for each pivotal element $C_j(a)$ corresponding to $j\in J_1$ and $x\in X_{0,j}$, we have $C'_j(a) = C_j(a)\notin x^S$, ensuring $f_{C'_j}(x^S) = 0$.\qedhere
\itemend
\pfend

\begin{clm}
With probability $1-o(1)$ over the choice of $S\sim\cB$, we have $|\compO|\le \sA[1+o(1)]\abs|\cH^S|$.
\end{clm}

\pfstart
For each $i\in[n]$, let $d_i$ denote the number of sequences $(C_j)\in\compO$ for which $i$ is a pivotal element.  Since each $(C_j)$ has at most $q^2$ pivotal elements, we see that
\[
\sum_{i\in[n]} d_i \le |\compO|q^2 = o\s[|\compO|\sqrt{n}].
\]
In particular,
$
\bE_{S\sim\cB} \skB[\sum_{i\in S} d_i] = o(|\compO|).
$
By Markov's inequality, with probability $1-o(1)$, we have $\sum_{i\in S} d_i = o(|\compO|)$, implying the claim.
\pfend

\begin{clm}
We have $|\cH^S|\le \sA[1+o(1)]|\cR^S|$.
\end{clm}

\pfstart
In this case, it is easier to consider each $j\in J_1$ independently.
Again, the conditions for different $j$ are independent, hence,
\[
|\cR^S| = \prod_{j\in J_1} |\cR^S_j|,
\qquad\text{and}\qquad
|\cH^S| = \prod_{j\in J_1} |\cH^S_j|,
\]
where $\cR^S_j$ and $\cH^S_j$ are the projections of $\cR^S$ and $\cH^S$ onto the $j$th component.
We prove that
\begin{equation}
\label{eqn:RH1}
|\cH^S_j| \le \ee^{O\s[n^{-1/4}\sqrt{\log n}|X_{1,j}|]} |\cR^S_j|,
\end{equation}
which implies the claim, since then
\[
\frac{|\cH^S|}{|\cR^S|} 
\le \ee^{O\s[n^{-1/4}\sqrt{\log n}\sum_{j\in J_1} |X_{1,j}|]} 
\le \ee^{O\s[n^{-1/4}\sqrt{\log n}\cdot q]} = 1+o(1).
\]

Let $\cH^S_{j,k}$ denote the subset of $C'\in \cH^S_{j}$ such that for exactly $k$ values of $a\in[\sqrt n]$, we have $C'(a) \in S_j$.  In particular, $\cR^S_{j} = \cH^S_{j,0}$, and $\cH^S_{j} = \bigcup_k \cH^S_{j,k}$.
We say that a clause $C\in \cR^S_{j}$ is \emph{in relation} with a clause $C'\in\cH^S_{j,k}$ iff $C(a) = C'(a)$ whenever $C'(a)\notin S_j$.

Clearly, each clause $C\in \cR^S_{j}$ is in relation with at most $\binom{\sqrt n}k |S_j|^k$ clauses in $\cH^S_{j,k}$.
Next, we claim that if we take a clause $C'\in \cH^S_{j,k}$ and substitute each $C'(a)\in S_j$ with an element of $y_k\setminus S$, we get a clause $C\in \cR^S_j$.
First, any $C'(a)$ that was changed was not a pivotal element of $C'$ since $S_j\subseteq S$ and $C'\in\cH^S_j$.
Next, a new element $C(a)$ can become a pivotal element of $C$, but it lies outside of $S$, so, nonetheless, $C\in\cR^S_j$.
Hence, each clause $C'\in\cH^S_{j,k}$ is in relation with at least $\abs|y_j\setminus S|^k$ clauses in $\cR^S_j$.
Using double counting,
\[
\frac{|\cH^S_{j,k}|}{|\cR^S_{j}|}
\le\frac{\binom{\sqrt n}k |S_j|^k}{|y_j\setminus S|^k}
\le\frac{n^{k/2}/k!\cdot \sA[O(|X_{1,j}|n^{1/4}\sqrt{\log n})]^k}{\sA[\Omega(n)]^k}
= \frac1{k!} \s[{O\sB[\frac{|X_{1,j}|\sqrt{\log n}}{n^{1/4}}]}]^k.
\]
Hence,
\[
\frac{|\cH^S_{j}|}{|\cR^S_{j}|} 
\le 1+\sum_{k\ge 1} \frac1{k!} \s[{O\sB[\frac{|X_{1,j}|\sqrt{\log n}}{n^{1/4}}]}]^k
=\ee^{O\s[n^{-1/4}\sqrt{\log n}|X_{1,j}|]}. \qedhere
\]
\pfend

\subsection{Proof of \rf(lem:zero)}
\label{sec:zero}
As in \rf(sec:patternsproof), we treat pairs of inputs that are far from each other separately.
Let a parameter $\gamma = \Theta(\sqrt{n}\log n)$ be specified later.
Define the graph $G$ as in \rf(sec:simple).
Let $G_1,\dots,G_\varkappa$ be the connected components of $G$, and
\[
z_k = \bigcap_{x\in G_k} x.
\]
Using \rf(lem:intersections), we get that
\[
|z_k| \ge \frac n2 - O(q \gamma)\ge \frac n2 - O(n^{3/4}).
\]
We impose the following constraints on $S\sim\cB$:
\begin{equation}
\label{eqn:SzeroCondition}
|S| = O(\sqrt n),
\quad
|S\cap x| \ge \frac{\sqrt{n}}2 - O\s[n^{1/4}\sqrt{\log n}],
\quad\text{and}\quad
|S\setminus z_k| \le \frac{\sqrt{n}}2 + O\s[n^{1/4}\sqrt{\log n}]
\end{equation}
for all $x\in X$ and $k\in[\varkappa]$.
By Bernstein's inequality again, if the $O$ factors are chosen appropriately, a set $S\sim\cB$ satisfies~\rf(eqn:SzeroCondition) with probability $1-o(1)$.

\mycommand{posit}{\cP^\emptyset}
\mycommand{positS}{\cP^S}
\mycommand{positT}{\cP^T}
Let us assume up to the end of the section that a subset $S$ satisfying~\rf(eqn:SzeroCondition) is fixed.
We say a clause $C$ is \emph{positive with respect to the shift} $T\in\{\emptyset, S\}$ iff $f_C(x^T)=1$ for some $x\in X'$.
Denote the set of such clauses by $\positT$.

In this section, we are also going to treat a clause $C\colon[\sqrt n]\to[n]$ as a multiset.
For example, if $A\subseteq[n]$, we denote by $C\cap A$ the partial function from $[\sqrt n]$ to $[n]$, defined by $(C\cap A)(a) = C(a)$ for all $a\in[\sqrt n]$ such that $C(a)\in A$, and not defined for the remaining $a$.  
We call such functions \emph{partial clauses}.  
A partial clause $C\setminus A$ is defined similarly.
The \emph{size} of a partial clause is the size of its domain.  We say that a partial clause is contained in $A\subseteq[n]$ if its range is contained in $A$.

\begin{clm}
\label{clm:CcapS}
We have
\[
\Pr_{C\sim\positS} \skB[|C\cap S| \ge \Omega(\log n)] \le \frac1n.
\]
\end{clm}

\pfstart
This holds because $|C\cap S|$ approximately follows a Poison distribution.
Indeed, for a non-negative integer $k$, let 
\[
\positS_k = \sfig{C\in\positS \midA |C\cap S| = k}.
\]
We say that $C\in\positS_0$ is \emph{in relation} with $C'\in\positS_k$ iff $C(a) = C'(a)$ for all $a$ such that $C'(a)\notin S$.

Each $C\in\positS_0$ is in relation with at most $\binom{\sqrt n}{k} |S|^k$ clauses in $\positS_k$.  
On the other hand, let $C'\in\positS_k$.
Then, there exists $x\in X'$ such that $C'\subseteq x^S$.
Hence, $C'$ is in relation with at least $|x\setminus S|^k = \sA[\Omega(n)]^k$ clauses in $\positS_0$.
Using double counting,
\[
\frac{|\positS_k|}{|\positS_0|} \le \frac{\binom{\sqrt n}{k} \sA[O(\sqrt n)]^k}{\sA[\Omega(n)]^k} = \frac{\sA[O(1)]^k}{k!}.
\]
This implies the claim.
\pfend

Thus, we can only focus on those $C$ that have small intersection with $S$.
Let $B$ be a partial clause with $B\subseteq[n]\setminus S$ and $\sqrt{n}-O(\log n)\le |B|\le \sqrt{n}$. For $T\in\{\emptyset,S\}$, let us denote
\[
\positT_B = \{C\in \positT \mid C\setminus S = B\}.
\]

We say $B$ is {\em bad} if $B\subseteq x\cap y$ where $x, y\in X'$ are vertices from different connected components of $G$.  Otherwise, we call $B$ \emph{good}.

\begin{lem}
\label{lem:goodB}
If $B$ is good, then
\[
|\posit_B| \ge \s[1- O{\sA[n^{-1/4}\log^{3/2}n]}] |\positS_B|.
\]
\end{lem}

\pfstart
Let $X_B = \{x\in X'\mid B\subseteq x\}$.
If $X_B$ is empty, then both $\posit_B$ and $\positS_B$ are empty, and we are done, so assume there is some $x\in X_B$.
Let $y_B = \bigcap_{y\in X_B} y$.
Also, as $B$ is good, $X_B$ is contained in some connected component $G_k$ of $G$.
In particular, $z_k\subseteq y_B$.

\mycommand{DComp}{\bar{D}}
Let $\DComp\subseteq[\sqrt n]$ be the complement of the domain of $B$.
In particular, $|\DComp| = O(\log n)$.
The size of $\posit_B$ is at least the number of functions from $\DComp$ to $x\cap S$,
and the size of $\positS_B$ is at most the number of functions from $\DComp$ to $S\setminus y_B \subseteq S\setminus z_k$.
Thus, using~\rf(eqn:SzeroCondition):
\[
\frac{|\posit_B|}{|\positS_B|} \ge \s[\frac{\frac{\sqrt n}2 - O(n^{1/4}\sqrt{\log n})}{\frac{\sqrt n}2 + O(n^{1/4}\sqrt{\log n})}]^{O(\log n)} \ge 1- O(n^{-1/4}\log^{3/2}n).\qedhere
\]
\pfend

\begin{lem}
\label{lem:badB}
We have
\[
\absC| \bigcup_{\text{\rm $B$ is bad}} \positS_B | \le O\sB[\frac1n]\, |\positS|.
\]
\end{lem}

\pfstart
Fix a particular pair $x,y\in X'$ of vertices that lie in different connected components of $G$.  
Then, for each $\positS_B$ with $B\subseteq x\cap y$, and each clause $C\in \positS_B$, we have $C \subseteq (x\cap y) \cup S$.  
On the other hand, we may lower bound the number of clauses contained in $\positS$ by the number of clauses contained in $x^S$. 
Thus,
\begin{equation}
\label{eqn:badB}
\frac{\abs|\bigcup_{B\subseteq x\cap y} \positS_B |} {|\positS|}
\le \s[ \frac{\absA| (x\cap y) \cup S |}{|x^S|} ]^{\sqrt n}
\le \s[ \frac{\frac n2 - \Omega(\gamma)}{\frac n2 - O(\sqrt{n})} ]^{\sqrt n}
\le \sC[1 - \Omega{\sB[\frac\gamma n]}]^{\sqrt n} \le \ee^{-\Omega(\gamma/{\sqrt n})}.
\end{equation}
Taking the $\Theta$-factor in the definition of $\gamma$ sufficienly large and summing~\rf(eqn:badB) over all $x$ and $y$, we obtain the lemma.
\pfend

Thus, using Lemmata~\ref{lem:goodB}, \ref{lem:badB} and \rf(clm:CcapS), 
\[
|\posit| \ge \sum_{\text{$B$ is good}} |\posit_B|
\ge \sA[1 - O(n^{-1/4}\log^{3/2}n)] \sum_{\text{$B$ is good}} |\positS_B|
\ge \sA[1 - O(n^{-1/4}\log^{3/2}n)] |\positS|,
\]
proving \rf(lem:zero).

\section{Testing Monotonicity of Regular LTFs}
\label{sec:upper}

\subsection{Randomized Bisection Process}

The key component of the analysis of the bisection algorithm and the proof
of Theorem~\ref{thm:regular-LTFs} is the analysis
of \emph{randomized bisection processes}, as defined below.

\begin{definition}[Randomized bisection process]
	Fix any finite set $S$.
	The \emph{randomized bisection process} with initial set $S$
	is the sequence of random sets $S_0,S_1,S_2,\ldots$
	defined as follows. Initially, $S_0 = S$.
	For each $k \ge 1$, $S_{k-1}$ is partitioned uniformly at random into two sets
	$A_k$ and $B_k$. Then the set $S_k$ is chosen to be either $A_k$ or $B_k$ by some
	arbitrary (and possibly adversarial) external process.
\end{definition}


\begin{lemma}
\label{lem:bisection-process}
	For any $\delta>0$, there exists $\kappa = \kappa(\delta)$ such that 
	with probability at least $1-\delta$,
	the randomized bisection process $S_0,S_1,S_2,\ldots$ with initial set $S$ satisfies 
	\begin{equation}
	\label{eqn:bisection}
	\frac12\cdot \frac{|S|}{2^k} < |S_k| < \frac32 \cdot \frac{|S|}{2^k}
	\end{equation}
	for every $k \le \log |S| - \kappa$.
\end{lemma}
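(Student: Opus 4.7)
Let $m_k := |S_k|$, $\beta_k := |S|/2^k$, and $X_k := m_k - \beta_k$, so the desired conclusion is equivalent to $|X_k| < \beta_k/2$ for every $k \le K := \log|S| - \kappa$. Because $|A_k|$ is distributed as the sum of $m_{k-1}$ independent $\mathrm{Bernoulli}(1/2)$ random variables and the adversary picks $|S_k| \in \{|A_k|,\ m_{k-1}-|A_k|\}$, the one-step update takes the form $m_k = m_{k-1}/2 + E_k$, where $|E_k| = Z_k := \bigl||A_k| - m_{k-1}/2\bigr|$ and $\operatorname{sign}(E_k)$ is under adversarial control. Consequently $X_k = X_{k-1}/2 + E_k$, and unrolling gives the central identity
\[
X_k = \sum_{j=1}^{k} \frac{E_j}{2^{k-j}}, \qquad\text{so that}\qquad |X_k| \;\le\; \sum_{j=1}^{k} \frac{Z_j}{2^{k-j}}.
\]

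The plan is to bound each $Z_j$ by Hoeffding's inequality applied to the binomial $|A_j|$, but with the failure probabilities distributed \emph{geometrically} across the steps. Fix $\delta_k := \delta/2^{K-k+2}$, so $\sum_{k\le K}\delta_k \le \delta$. Hoeffding guarantees that $G_k := \bigl\{ Z_k \le \sqrt{(m_{k-1}/2)\ln(2/\delta_k)}\bigr\}$ holds with conditional probability at least $1-\delta_k$ given $m_{k-1}$, and a union bound gives $\Pr[\bigcap_{k \le K} G_k] \ge 1-\delta$. We now work deterministically inside this intersection.

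Introduce the stopping time $\tau := \min\{k \ge 1 : |X_k| \ge \beta_{k+1}\}$ (with $\tau := \infty$ if no such $k$ exists). By minimality of $\tau$, for every $j \le \tau$ we still have $|X_{j-1}| < \beta_j$ and hence $m_{j-1} \le 3\beta_j$, so under $\bigcap G_k$ we get $Z_j \le \sqrt{(3\beta_j/2)\ln(2/\delta_j)}$. Substituting $\beta_j = \beta_\tau\cdot 2^{\tau-j}$ into the unrolled identity at index $\tau$ and re-indexing by $i = \tau-j$ turns the estimate into a geometrically-decaying sum; combining $\sqrt{a+b} \le \sqrt{a}+\sqrt{b}$ with the convergence of $\sum_i 2^{-i/2}$ and $\sum_i \sqrt{i}\,2^{-i/2}$ yields
\[
|X_\tau| \;\le\; C\sqrt{\beta_\tau \bigl(K - \tau + \log(1/\delta) + 1\bigr)}
\]
for some absolute constant $C$.

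To finish, suppose for contradiction that $\tau \le K$. Then $|X_\tau| \ge \beta_{\tau+1} = \beta_\tau/2$, which combined with the display forces $\beta_\tau \le 4C^2(K-\tau+\log(1/\delta)+1)$, i.e.\ $2^{K-\tau+\kappa} \le 4C^2(K-\tau+\log(1/\delta)+1)$. Since the left-hand side is exponential in $K-\tau$ while the right-hand side is merely linear, choosing $\kappa(\delta) := \lceil \log_2\bigl(8 C^2(\log(1/\delta)+1)\bigr)\rceil$ makes this inequality fail for every $K-\tau \ge 0$. Hence $\tau > K$ under $\bigcap G_k$, which is exactly the statement of the lemma. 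The main obstacle—and the reason a naive argument fails—is keeping $\kappa$ a function of $\delta$ alone: a uniform allocation $\delta_k = \delta/K$ would inject an extra $\log\log|S|$ factor into the bound for $|X_\tau|$, whereas the geometric split $\delta_k \propto 2^{k-K}$ exactly balances the weights $1/2^{\tau-j}$ so that the final scale $j \approx \tau$ dominates the sum.
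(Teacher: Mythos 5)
Your proof is correct, but it takes a genuinely different route from the paper's. The paper fixes the adversary's strategy (always take the smaller piece for the lower tail, always the larger for the upper tail), bounds $\mathbb{E}[|S_k|]$ by a recursion obtained from Fubini plus Chernoff--Hoeffding plus Jensen's inequality (using convexity of $x\mapsto x/2 - c\sqrt{x}$), and converts expectation to probability via a one-sided Markov bound; crucially, the need to control \emph{all} $k\le K$ simultaneously is handled implicitly, because once the adversary always takes the smaller piece, a failure $|S_j|\le\frac12|S|/2^j$ at step $j$ forces the same failure at every later $k>j$, so the union over $k$ collapses to the single event at $k=K$. Your proof instead works pathwise for an arbitrary adaptive adversary: you track the deviation $X_k=|S_k|-|S|/2^k$, unroll the one-step recursion into a weighted sum of per-round binomial deviations, and allocate Hoeffding failure probabilities geometrically ($\delta_k\propto 2^{k}$) so that the index $j\approx\tau$ that carries weight $\Theta(1)$ in the unrolled sum also carries the lion's share of the error budget, after which a stopping-time/first-failure argument closes the loop by contrasting the exponential growth of $\beta_\tau=2^{K+\kappa-\tau}$ with the linear right-hand side. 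What your approach buys: it avoids the paper's appeal to ``it is clear that the best strategy for the adversary is\ldots'' (which implicitly needs a coupling argument to justify worst-case reduction), it handles the lower and upper bounds in a single two-sided deviation argument, and it makes explicit why $\kappa$ can depend only on $\delta$ and not on $|S|$ --- your closing remark about the $\log\log|S|$ penalty under a uniform $\delta_k=\delta/K$ allocation is exactly the point that the paper's implicit event-nesting trick is also designed to sidestep.
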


\begin{proof}
Let us prove the lower bound first.
It is clear that the best strategy for the adversary is to take the smallest of $A_k$ and $B_k$ on each step, so we may assume that the sets $S_k$ of size less than $|S_{k-1}|/2$ have double probability to appear, whereas the sets $S_k$ of size more than $|S_{k-1}|/2$ never appear at all.

Using Fubini's theorem and the Chernoff-Hoeffding bound, we obtain, for a fixed $S_{k-1}$, 
\begin{align*}
\E\skA[|S_{k}|] 
&= \frac{|S_{k-1}|}2 - 2 \int_{0}^{+\infty} \Pr\skB[\cB < \frac{|S_{k-1}|}2 - t]\, dt \\
&\ge \frac{|S_{k-1}|}2 - 2 \int_{0}^{+\infty} \ee^{-2{t^2}/{|S_{k-1}|}} \,dt
\ge \frac{|S_{k-1}|}2 - O\s[\sqrt{|S_{k-1}|}],
\end{align*}
where $\cB$ is the binomial probability distribution on $|S_{k-1}|$ elements with probability $\frac12$.

Since $x\mapsto \frac x2 - c\sqrt{x}$ is a convex function, if we unfix $S_{k-1}$, we get by Jensen's inequality
\begin{equation}
\label{eqn:bis1}
\E\skA[|S_{k}|] \ge \frac{\E\skA[|S_{k-1}|]}2 - O\s[\sqrt{\E\skA[|S_{k-1}|]}].
\end{equation}
It is clear that $|S_k|\le |S|/2^k$, thus, by induction on $k$,
\begin{equation}
\label{eqn:bis2}
\E\skA[|S_{k}|] 
\ge \frac{|S|}{2^k} - \sum_{j=1}^k \frac1{2^{k-j}}\cdot O\s[\sqrt{\frac{|S|}{2^{j-1}}}]
= \frac{|S|}{2^k} - \sqrt{\frac{|S|}{2^k}} \sum_{j=1}^k \frac{O(1)}{2^{(k-j)/2}}
\ge \frac{|S|}{2^k} - O\s[\sqrt{\frac{|S|}{2^k}}].
\end{equation}
If $\frac{|S|}{2^k} = \Omega_\delta(1)$, we have
\[
\E\skA[|S_{k}|] 
> \frac{|S|}{2^k} - \frac{\delta}{4}\cdot \frac{|S|}{2^k}.
\]
And since $|S_k| \le |S|/2^k$, we have by Markov's inequality that
\[
\Pr\skB[|S_k| \le \frac12\cdot \frac{|S|}{2^k}] \le \frac\delta2.
\]

The proof of the upper bound is similar. This time the adversary takes the largest of $A_k$ and $B_k$.
Similarly to~\rf(eqn:bis1), we get
\[
\E\skA[|S_{k}|] \le \frac{\E\skA[|S_{k-1}|]}2 + O\s[\sqrt{\E\skA[|S_{k-1}|]}].
\]
We show by induction on $k$ that if $\frac{|S|}{2^k} = \Omega(1)$, then $\E[|S_k|] \le \frac32\cdot\frac{|S|}{2^k}$.  This is done similarly to~\rf(eqn:bis2):
\[
\E\skA[|S_{k}|] 
\le \frac{|S|}{2^k} + \sum_{j=1}^k \frac1{2^{k-j}}\cdot O\s[\sqrt{\frac{3|S|}{2^{j}}}]
= \frac{|S|}{2^k} + \sqrt{\frac{|S|}{2^k}} \sum_{j=1}^k \frac{O(1)}{2^{(k-j)/2}}
\le \frac{|S|}{2^k} + O\s[\sqrt{\frac{|S|}{2^k}}].
\]
Again, if $\frac{|S|}{2^k} = \Omega_\delta(1)$, we have
\(
\E\skA[|S_{k}|] 
< \frac{|S|}{2^k} + \frac{\delta}{4}\cdot \frac{|S|}{2^k},
\)
and, since $|S_k|\ge {|S|}/{2^k}$,
\[
\Pr\skB[|S_k| \ge \frac32\cdot \frac{|S|}{2^k}] \le \frac\delta2.\qedhere
\]
\pfend

\subsection{Non-Monotonicity of LTFs}

\begin{proposition}
\label{prop:non-monotone-LTF}
	If $f\colon \{-1,1\}^n \to \{-1,1\}$ is a non-constant LTF with weights 
	$w_1,\ldots,w_n$ such that $\sum_{i : w_i < 0} |w_i| > \max_i w_i$, then
	$f$ is not monotone.
\end{proposition}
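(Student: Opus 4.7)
The plan is to exhibit an explicit pair $x \preceq y$ with $f(x) = +1 > -1 = f(y)$. Write $P = \{i : w_i > 0\}$ and $N = \{i : w_i < 0\}$, set $B = \sum_{i \in N}|w_i|$ and $M = \max_i w_i$, and note that the hypothesis $B > M$ forces $N$ to be non-empty (otherwise the left-hand side is $0$ while the right-hand side is $\ge 0$). I take $x_i = -1$ and $y_i = +1$ for every $i \in N$, and $x_i = y_i$ on every other coordinate, with the values $x_P \in \{-1, +1\}^P$ chosen below. Then $x \preceq y$ by construction, and writing $A = \sum_{i \in P} w_i x_i$, one computes $\sum_i w_i x_i = A + B$ and $\sum_i w_i y_i = A - B$. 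So the candidate witness succeeds exactly when $x_P$ can be chosen so that $A \in [\theta - B,\ \theta + B)$.

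The crux is then to show that some achievable value of $A$ lies in this length-$2B$ window. I plan a discrete intermediate-value argument: flipping the coordinates $x_i$ ($i \in P$) from $-1$ to $+1$ one at a time moves $A$ from $-\sum_P w_i$ up to $+\sum_P w_i$ in jumps of $2w_i \le 2M$, which is strictly less than $2B$. In particular, whenever the trajectory of $A$ crosses a target $t \in [-\sum_P w_i, \sum_P w_i]$, one of the two values bracketing the crossing lies within $M < B$ of $t$. Specialising to $t = \theta$: if $\theta \in [-\sum_P w_i, \sum_P w_i]$, this bracketing value already lies in $[\theta - B, \theta + B)$. If $\theta > \sum_P w_i$, I take $A = \sum_P w_i$ (i.e., $x_i = +1$ on $P$); the required bound $A \ge \theta - B$ reduces to $\theta \le \sum_i |w_i|$, which holds because $f$ is non-constant. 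The case $\theta < -\sum_P w_i$ is symmetric, using $A = -\sum_P w_i$ and the non-constancy bound $\theta > -\sum_i |w_i|$.

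The only real difficulty is this short case analysis, and it uses both hypotheses in tandem: $B > M$ is what ensures that the sweep of $A$ cannot step over the window, while non-constancy of $f$ (which gives $\theta \in (-\sum_i |w_i|,\ \sum_i |w_i|]$) is what prevents the window from lying entirely outside the range of $A$. No quantitative estimates beyond these are required, so once the bookkeeping on the three sub-cases is done, the proof is complete.
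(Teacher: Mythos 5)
Your proof is correct and takes essentially the same approach as the paper: fix the negative-weight coordinates in opposite ways for $x$ and $y$, then sweep the remaining coordinates so that the weighted sum lands in a length-$2B$ window around $\theta$, using a discrete intermediate-value argument with step size bounded by $2M < 2B$. The only cosmetic difference is that you handle the possible locations of $\theta$ via an explicit three-case split, whereas the paper packages the same reasoning by producing two endpoints $x'$ and $x''$ (guaranteed by non-constancy) and interpolating between them.
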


\begin{proof}
	Let $N = \{i \in [n] \mid w_i < 0\}$ denote the set of indices with negative 
	weights and let $\eta = \sum_{i \in N} |w_i|$.
	Let $X \in \{-1,1\}^n$ be the subset of inputs such that for every $i \in N$, $x_i = 1$.
	
	There exists $x \in X$ such that
	$\theta-2\eta \le \sum_{i \in [n]} w_i x_i < \theta$.
	Indeed, there exists an input 
	$x' \in X$ with $\sum_{i \in [n]} w_i x'_i < \theta$ (otherwise $f$ is the 
	constant 1 function), and an input
	$x'' \in X$ with $\sum_{i \in [n]} w_i x''_i \ge \theta-2\eta$
	(otherwise $f$ is the constant $-1$ function).
	Also, $\max_i |w_i| \le \eta$, hence, changing the value of one variable changes the value of the sum $\sum_{i \in [n]} w_i x_i$ by at most $2\eta$.
	
	With this choice of $x$, let $y \in \{-1,1\}^n$ be defined by $y_i = x_i$ for $i \in [n] \setminus N$
	and $y_j = -1$ for every $j \in N$. Then $\sum_{i \in [n]} w_i y_i \ge \theta$
	so $x \succeq y$ and $1 = f(y) > f(x) = -1$, hence, $f$ is non-monotone.
\end{proof}

In the proof of Theorem~\ref{thm:regular-LTFs}, we need to show that
regular LTFs that are far from monotone must have a large number of 
reasonably large negative weights. 
Using this lemma, we obtain the following bound on the magnitude of the negative
weights of regular LTFs that are far from monotone.

\begin{proposition}
\label{prop:regular-far-monotone}
  	Fix $n\ge 1$ and $\epsilon > 0$.
  	Let $f\colon \{-1,1\}^n \to \{-1,1\}$ be a $\tau$-regular LTF with weights $w_1,\ldots,w_n$ that is $\epsilon$-far from monotone.
		Assume $\sum_{i \in [n]} w_i^{\,2} = 1$ and $\tau \le \frac\eps4$.
  	Then the set
  	$N = \left\{i \in [n]\mid w_i < 0 \right\}$ 
  	of indices corresponding to negative weights satisfies 
  	$\sum_{i \in N} w_i^{\,2} \ge \frac{\epsilon^2}{256 \ln (8/\epsilon)}$.
\end{proposition}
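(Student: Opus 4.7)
The plan is to compare $f$ with an explicit monotone surrogate $f'$ obtained by flipping the signs of the negative weights, and to bound the disagreement probability between $f$ and $f'$ via Hoeffding's inequality for the negative-weight contribution combined with the Berry--Ess\'een anti-concentration bound for the full affine form. Since $f$ is $\eps$-far from monotone, this disagreement is at least $\eps$, and that in turn forces $\eta^2 := \sum_{i\in N} w_i^{\,2}$ to be reasonably large.

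Concretely, I would define $f'(x) = \sgn\bigl(\sum_{i\in[n]} |w_i| x_i - \theta\bigr)$, which is monotone because all of its weights are non-negative. Setting $T(x) = \sum_{i\in[n]} w_i x_i - \theta$ and $S(x) = \sum_{i\in N} w_i x_i$, the affine form of $f'$ evaluates to $T(x) - 2S(x)$. Consequently $f(x) \ne f'(x)$ forces $T(x)$ and $T(x) - 2S(x)$ to have opposite signs, which (by a short case split) implies $|T(x)| \le 2|S(x)|$.

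Splitting on whether $|S|$ exceeds a threshold $t > 0$,
\[
\eps \;\le\; \Pr_x[f(x) \ne f'(x)] \;\le\; \Pr_x[|S| > t] + \Pr_x[|T| \le 2t].
\]
Hoeffding's inequality applied to $S$, whose restricted weight vector has $\ell_2$-norm $\eta$, bounds the first term by $2 e^{-t^2/(2\eta^2)}$; the Berry--Ess\'een corollary (Lemma~\ref{lem:berry-esseen-cor}) applied to $T$, using $\|w\|_2 = 1$ and $\tau$-regularity, bounds the second by $4t + 2\tau \le 4t + \eps/2$. Choosing $t$ of order $\eps$ (e.g.\ $t = \eps/16$) makes the Berry--Ess\'een contribution at most $3\eps/4$, forcing $2e^{-t^2/(2\eta^2)} \ge \eps/4$; rearranging gives $\eta^2 \ge \eps^2/(C \ln(C'/\eps))$ for absolute constants $C, C'$, and a careful tuning of the constants yields the stated $\eps^2/(256 \ln(8/\eps))$.

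The technical core is clean and uses only the two standard concentration tools from Section~\ref{sec:preliminaries}; the main obstacle is purely arithmetic, namely tracking constants carefully enough that the final bound matches $256$ and $8$ exactly rather than some larger absolute constants.
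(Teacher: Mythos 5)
Your proposal uses the same high-level plan as the paper: compare $f$ to a monotone surrogate, observe that $f\ne\text{surrogate}$ forces the negative-weight contribution to dominate, and split the disagreement probability into a Hoeffding tail on the negative-weight sum and a Berry--Ess\'een anti-concentration bound on the rest. The surrogate choice (flipping signs vs.\ the paper's zeroing them out) is cosmetic: both surrogates equal $\sgn(T-S)$-up-to-sign considerations, and the disagreement event with $f$ is literally the same set of inputs.

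The substantive difference is which random variable you feed into Berry--Ess\'een. You use the inequality $|T|\le 2|S|$ and apply the corollary to the \emph{full} affine form $T$ at threshold $2t$. The paper instead observes that $T$ and $T-2S$ have opposite signs iff their midpoint $T-S = \sum_{i\notin N}w_ix_i-\theta$ satisfies $|T-S|<|S|$, and applies the corollary to $T-S$ at threshold $t$. Your version buys one small simplification—with $\|w\|_2=1$ and regularity exactly $\tau$ there is no $1/\sqrt{1-\eta}$ factor, so no preliminary ``if $\eta>1/2$, done'' case split is needed. But it costs a genuine factor of $2$: the Berry--Ess\'een term becomes $4t+2\tau$ rather than roughly $2\sqrt{2}\,t + 2\tau$, and optimizing $t$ (e.g.\ $t=\eta\sqrt{2\ln(8/\eps)}$ with $\eta^2=\sum_{i\in N}w_i^{\,2}$) yields $\eta^2\ge \eps^2/(512\ln(8/\eps))$, not the stated $256$. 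This is not a matter of tuning constants more carefully, as you conjecture—the loss is structural, and matching $256$ requires the tighter $|T-S|<|S|$ decomposition. Since the exact constant feeds only into $O_{\eps,\tau}(1)$ terms downstream, this is a minor quantitative gap; to recover the stated bound, simply replace $|T|\le 2|S|$ by $|T-S|<|S|$ and apply Berry--Ess\'een to $\sum_{i\notin N}w_ix_i-\theta$, reinstating the $\eta\le\tfrac12$ case split to control $1/\sqrt{1-\eta}$.
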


\begin{proof}
Let $g\colon \{-1,1\}^n \to \{-1,1\}$ be the LTF 
$g(x) = \sgn(\sum_{i \in [n] \setminus N} w_i x_i - \theta)$ obtained by 
removing the negative weights of $f$. 
Since the function $g$ is monotone,
$$
\Pr[ f(x) \neq g(x) ] \ge \epsilon.
$$
The event $f(x) \neq g(x)$ can only occur when 
$|\sum_{i \in N} w_i x_i| > \left|\sum_{i \in [n] \setminus N} w_i x_i - \theta\right|$.
So for any $t > 0$,
\begin{align*}
\Pr[ f(x) \neq g(x) ] 
&\le
\Pr\skC[ \absB|\sum_{i \in [n] \setminus N} w_i x_i - \theta| \le t ]
+ \Pr\skC[ \absB|\sum_{i \in N} w_i x_i| > t ].
\end{align*}
Define $\eta = \sum_{i \in N} w_i^{\,2}$. If $\eta > \frac12$, then we are done.
So assume from now on that $\eta \le \frac12$.
Fix $t = \sqrt{ 2\eta \ln(\tfrac8\epsilon)}$.
By Lemma~\ref{lem:berry-esseen-cor}, 
$$
\Pr\skC[ \absB|\sum_{i \in [n] \setminus N} w_i x_i - \theta| \le \sqrt{ 2\eta\ln(\tfrac8\epsilon)}]
\le 2\sqrt{\frac{2\eta\ln(\tfrac8\epsilon)}{1-\eta}} + \frac{\epsilon}{2}
\le 4\sqrt{\eta\ln(\tfrac8\epsilon)} + \frac{\epsilon}{2}
$$
and by the Hoeffding bound,
$$
\Pr\skC[ \absB|\sum_{i \in N} w_i x_i| > \sqrt{ 2\eta\ln(\tfrac8\epsilon)} ]
\le 2\ee^{-\big(2\eta\ln(\tfrac8\epsilon)\big)/2\eta} \le \frac\epsilon4.
$$
Putting all the inequalities together, we obtain the inequality
$\frac\epsilon4 \le 4\sqrt{\eta\ln(\tfrac8\epsilon)}$, which is satisfied
if and only if $\eta \ge \frac{\epsilon^2}{256 \ln(\frac8\epsilon)}$.
\end{proof}

\begin{corollary}
\label{cor:regular-far-monotone}
  	Fix $\epsilon > 0$ and $\tau > 0$. 
  	There exists $n_0 = n_0(\epsilon,\tau)$ such that for every $n \ge n_0$,
  	if $f \colon \{-1,1\}^n \to \{-1,1\}$ is a $\frac\tau{\sqrt{n}}$-regular LTF with normalized weights $w_1,\ldots,w_n$ ($\sum_{i\in [n]} w_i^{\,2} = 1$)
  	that is $\epsilon$-far from monotone, then the set
  	$N^\dagger = \sfigB{i \in [n] \midB w_i < -\frac{\epsilon}{\sqrt{512\ln(\frac8\epsilon) n}} }$ 
  	has cardinality $|N^\dagger| \ge \frac{\epsilon^2}{512 \tau^2 \ln(\frac8\epsilon)} \cdot n$.
\end{corollary}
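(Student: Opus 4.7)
The plan is to deduce the corollary directly from Proposition~\ref{prop:regular-far-monotone} by splitting the set $N$ of indices with negative weights into a ``significantly negative'' part (namely $N^\dagger$) and a ``marginally negative'' remainder, and then showing that the marginal part cannot account for the full $\ell_2$-mass that Proposition~\ref{prop:regular-far-monotone} guarantees. First I would choose $n_0 = n_0(\eps,\tau)$ large enough that $\tau/\sqrt{n} \le \eps/4$ for all $n\ge n_0$, so that Proposition~\ref{prop:regular-far-monotone} applies to $f$ and yields
\[
\sum_{i\in N} w_i^{\,2} \;\ge\; \frac{\eps^2}{256\ln(8/\eps)}.
\]

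Next I would bound the contribution of $N\setminus N^\dagger$ in the obvious way: by the definition of $N^\dagger$, every $i\in N\setminus N^\dagger$ satisfies $w_i^{\,2} \le \frac{\eps^2}{512\ln(8/\eps)\,n}$, and there are at most $n$ such indices, so
\[
\sum_{i\in N\setminus N^\dagger} w_i^{\,2} \;\le\; \frac{\eps^2}{512\ln(8/\eps)}.
\]
Subtracting from the Proposition's bound leaves
\[
\sum_{i\in N^\dagger} w_i^{\,2} \;\ge\; \frac{\eps^2}{256\ln(8/\eps)} - \frac{\eps^2}{512\ln(8/\eps)} \;=\; \frac{\eps^2}{512\ln(8/\eps)}.
\]

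Finally I would use regularity in the opposite direction: since $f$ is $\tfrac{\tau}{\sqrt{n}}$-regular with $\sum_i w_i^{\,2}=1$, every individual weight satisfies $w_i^{\,2} \le \tau^2/n$. Combining this uniform upper bound on each summand with the lower bound on $\sum_{i\in N^\dagger} w_i^{\,2}$ immediately gives $|N^\dagger| \ge \frac{\eps^2}{512\,\tau^2\ln(8/\eps)}\cdot n$, as required. There is no real obstacle here; the only thing to watch is making sure the numerical constants line up so that the $\frac{1}{256}$ from Proposition~\ref{prop:regular-far-monotone} splits cleanly into two copies of $\frac{1}{512}$, one absorbed by the threshold definition and one producing the claimed cardinality bound.
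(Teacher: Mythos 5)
Your proposal is correct and follows essentially the same route as the paper: apply Proposition~\ref{prop:regular-far-monotone} once $n \ge n_0$ makes the regularity small enough, bound the $\ell_2$-mass of $N\setminus N^\dagger$ by the threshold squared times $n$, subtract to lower-bound $\sum_{i\in N^\dagger} w_i^{\,2}$ by $\frac{\eps^2}{512\ln(8/\eps)}$, and then convert to a cardinality bound via the uniform bound $w_i^{\,2}\le\tau^2/n$. The constants line up exactly as in the paper, so there is nothing to add.
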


\begin{proof}
	Let $n_0$ be the minimal positive integer such that 
	$\frac\tau{\sqrt{n_0}} < \frac\epsilon4$.
	Define $N = \{i \in [n] \mid w_i < 0\}$.
	By Proposition~\ref{prop:regular-far-monotone}, 
	the sum of the squares of the negative weights is bounded below by
	$\sum_{i \in N} w_i^{\,2} \ge \frac{\epsilon^2}{256 \ln(\frac8\epsilon)}$.
	For every element $i$ in $N \setminus N^\dagger$, the weight $w_i$ satisfies
	$w_i^{\,2} \le \frac{\epsilon^2}{512 \ln(\frac8\epsilon)n}$ so
	$
	\sum_{i \in N \setminus N^\dagger} w_i^{\,2} \le |N \setminus N^\dagger| \cdot \frac{\epsilon^2}{512 \ln(\frac8\epsilon)n} \le \frac{\epsilon^2}{512 \ln(\frac8\epsilon)}
	$
	and
	$$
	\sum_{i \in N^\dagger} w_i^{\,2} =
	\sum_{i \in N} w_i^{\,2} - \sum_{i \in N \setminus N^\dagger} w_i^{\,2} \ge 
	\frac{\epsilon^2}{256 \ln(\frac8\epsilon)} - \frac{\epsilon^2}{512 \ln(\frac8\epsilon)} = \frac{\epsilon^2}{512 \ln(\frac8\epsilon)}.
	$$
	The regularity of $f$ guarantees that
	$
	\sum_{i \in N^\dagger} w_i^{\,2}
	\le |N^\dagger| \frac{\tau^2}{n}
	$
	and so
	$
	|N^\dagger| \ge \frac{n}{\tau^2} \cdot \frac{\epsilon^2}{512 \ln(\frac8\epsilon)}.
	$
\end{proof}

\subsection{Proof of Theorem~\ref{thm:regular-LTFs}}
\label{sec:algProof}

	Let $f$ be any $\frac{\tau}{\sqrt{n}}$-regular LTF that is 
	$\epsilon$-far from monotone.
	We may assume that $\sum_i w_i^{\,2} = 1$.
  We modify the bisection algorithm from \rf(alg:bisection) to make the analysis easier.
	Define 
	\[
	c = \frac{\epsilon^2}{512 \tau^2 \ln(\frac8\epsilon)}
\qquad\text{and}\qquad
  \zeta = \frac{\epsilon}{\sqrt{512\ln(\frac8\epsilon)}}.
	\]
	
Let
\[
 k = \floorB[ \log(cn) - \max\sfigB{\log(\tfrac {8\tau}\zeta), \kappa(\tfrac18) } ],
\]
where $\kappa$ is as in \rf(lem:bisection-process).

Consider \rf(alg:modified).
Clearly, the algorithm never rejects a monotone function.
Let us now assume that $f$ is $\eps$-far from monotone.
By Corollary~\ref{cor:regular-far-monotone}, the set $N = \{i \in [n] \mid w_i < -\frac{\zeta}{\sqrt{n}}\}$ then has cardinality $|N| \ge cn$.

\begin{algorithm}
	\caption{Modified Bisection Algorithm}
	\label{alg:modified}
	\begin{algorithmic}[1]
		\State Draw $x \in \sbool^n$ uniformly at random.
		\State Draw $y \in \sbool^n$ uniformly at random $8/\eps$ times or until $f(x)\ne f(y)$.
		\State If no $y$ satisfying the condition $f(x)\ne f(y)$ was found, {\bf accept}.
		\State Assume $f(x)=-1$ and $f(y)=1$.  Otherwise, swap $x$ and $y$.
		\For{$k$ times}
			\State Draw $z \in \Hybrid(x,y)$ uniformly at random.
			\State If $f(z) = -1$, update $x \gets z$.
			\State Otherwise if $f(z) = 1$, update $y \gets z$.
		\EndFor
		\State If $\abs| x\cap y | > \frac32\cdot \frac{n}{2^k}$, {\bf accept}.
		\State Query all inputs in $\Hybrid(x,y)$.  {\bf Reject} if a non-monotone edge found; otherwise {\bf accept}.
	\end{algorithmic}
\end{algorithm}

Assume $x\in \sbool^n$ is fixed, and $y$ is uniformly sampled from $\sbool^n$.
First, $f$ is $\eps$-far from a constant function, hence, an $y$ satisfying $f(x)\ne f(y)$ will be found with probability at least $\frac78$.
Next, let $x\bigtriangleup y$ be the set of indices where $x$ and $y$ differ.
By Chernoff bound, the probability that $\abs|(x\bigtriangleup y)\cap N|< cn/4$ is $o(1)$.
Thus, with probability at least $\frac 34$, after Step~4, $x$ and $y$ satisfy $f(x)=-1$, $f(y)=1$ and the set $S= x\bigtriangleup y$ satisfies $|S\cap N|\ge cn/4$.

Let $x_\ell$ and $y_\ell$ denote the value of $x$ and $y$ after the $\ell$th iteration of the loop in \rf(alg:modified).  In particular, $x_0$ and $y_0$ are the inputs $x$ and $y$ after Step~4 as in the previous paragraph.
	Denote $S_\ell = x_\ell\bigtriangleup y_\ell$ and $N_\ell = N \cap S_\ell$.
	The sets $S_0,S_1,S_2,\ldots$ and the sets $N_0,N_1,N_2,\ldots$ are 
	randomized bisection processes with the initial sets $S$ and $N\cap S$, respectively.
	By Lemma~\ref{lem:bisection-process}, with probability
	at least $\frac14$, the sets $S_k$ and $N_k$ satisfy
	$$
	|S_k| < \frac32 \cdot \frac{|S|}{2^k}
	\le O\sB[\frac{n}{cn \frac\zeta\tau}]
	= O\sB[\frac\tau{c\zeta}]
	$$
	and
	$$
	|N_k| > \frac12 \cdot \frac{|N\cap S|}{2^k} 
	\ge \frac12 \cdot \frac{cn/4}{cn \frac\zeta{8\tau}}
	\ge \frac\tau{\zeta}.
	$$
	In turn, this implies that the sum of the weights with coordinates in $N_k$ satisfies 
	$$
	\sum_{i \in N_k} |w_i| \ge |N_k| \cdot \min_{i\in N_k} |w_i| > \frac\tau\zeta\cdot\frac\zeta{\sqrt n} = \frac\tau{\sqrt n}\ge \max_j |w_j|.
	$$ 
	Let $f_{x,y}$ denote the function $f$ restricted to the set $\Hybrid(x,y)$, where $x$ and $y$ are as in Step~10 of the algorithm.
	This function is non-constant since since $f_{x,y}(x) = -1$	and $f_{x,y}(y) = 1$.
	By Proposition~\ref{prop:non-monotone-LTF}, 
	$f_{x,y}$ is a non-monotone LTF on $|S_k| = O\sA[\frac\tau{c\zeta}]$
	variables. 
	Then the algorithm rejects in Step~11 after additional $2^{|S_k|} = 2^{\tO(\tau^3/\eps^3)}$ queries.

\subsection{Truncated Functions}
\label{sec:algBalance}
In \rf(sec:algProof), we showed that the bisection algorithm efficiently $\eps$-tests regular LTFs for monotonicity, as specified by \rf(thm:regular-LTFs).
However, in the actual lower bounds by Fischer \etal~\cite{fischer:monotonicitytesting}, and Chen \etal~\cite{chen:newAlgorithmsLowerBoundsMonotonicity, chen:booleanMonotonicityRequiresSqrtn}, \emph{truncated} LTFs are used, as in \rf(defn:truncate).
In this section, we show that if the bisection algorithm can test some class of functions for monotonicity, then it can also test the truncated version of the same class with a modest slow-down.

Towards this goal, we argue that with probability $\Omega_\eps(1)$, the bisection algorithm only queries inputs in the middle layers of the cube, i.e., satisfying $\frac n2 - \delta\sqrt n\le |x|\le \frac n2 + \delta\sqrt n$ in the notation of \rf(defn:truncate).  
It is easy to modify the parameters of \rf(alg:modified) in \rf(sec:algProof) so that the algorithm uses $O_{\eps,\tau,p}(1)+\log n$ queries, always accepts a monotone function, and rejects a non-monotone $\frac\tau{\sqrt n}$-regular LTF with probability at least $1-p$.  
Combining the two statements, we get an algorithm that $\eps$-tests truncated $\frac\tau{\sqrt n}$-regular LTFs for monotonicity in $O_{\eps,\tau}(\log n)$ queries.

Consider the performance of \rf(alg:modified) on a function of the form $\truncate_\delta(f)$.  
The algorithm does not know the value of $\delta$, but it knows $\eps$, the distance from a non-monotone function $\truncate_\delta(f)$ to the closest monotone function.  
By \rf(lem:berry-esseen-cor), there exists $\beta = \beta(\eps) > 0$ such that with probability at least $\frac\eps2$, the input $y$ found on Step 2 of the algorithm satisfies
\[
\frac n2 - (\delta-\beta) \sqrt n\le |y|\le \frac n2 + (\delta-\beta)\sqrt n.
\]
The input $x$ also satisfies the same estimates with probability $\Omega_\eps(1)$.

Let $x_\ell$, $y_\ell$ and~$S_\ell$ be as in \rf(sec:algProof).  Let also $z_\ell$ denote the input $z$ on the $(\ell+1)$st iteration of the loop, so that either $x_{\ell+1}$ or $y_{\ell+1}$ equals $z_\ell$.
We consider those executions of the algorithm, in which
\begin{equation}
\label{eqn:trunc}
\max\sfig{
\absC|{\absA|z_\ell \cap S_\ell\cap x_\ell | - \frac{|S_\ell\cap x_\ell|}2}|,\;\;
\absC|{\absA|z_\ell \cap S_\ell\setminus x_\ell | - \frac{|S_\ell\setminus x_\ell|}2}|}
\le \frac{\beta(1-\alpha)}4 \alpha^\ell \sqrt n 
\end{equation}
for all $\ell\le k$, where $k=\ceil[\log (4\sqrt n/{\beta})]$ and $\alpha = 0.9$ (or any other constant $\frac1{\sqrt 2}<\alpha<1$).

We first show that~\rf(eqn:trunc) is satisfied for all $\ell\le k$ with probability $\Omega_\eps(1)$.
By induction, for each $\ell$,
\begin{equation}
\label{eqn:trSell}
|S_\ell| \le \frac n{2^\ell} + \frac{\beta(1-\alpha)}2 \sum_{i=0}^{\ell-1} \frac{\alpha^i}{2^{\ell-i-1}} \sqrt n
\le \frac n{2^\ell} + \frac{\beta(1-\alpha)}{2\alpha-1} \alpha^\ell \sqrt n \le 2\cdot \frac n{2^{\ell}},
\end{equation}
where the last inequality holds if $n$ is large enough.
By the Chernoff-Hoeffding bound, the probability that~\rf(eqn:trunc) is satisfied for all $\ell\le k$ is at least
\[
\prod_{\ell=0}^{k} \s[{1 - 2\exp\sB[-2\frac{\frac{\beta^2(1-\alpha)^2}{16}\alpha^{2\ell}n }{2\cdot \frac n{2^\ell}}] }]^2
 \ge \s[ \prod_{\ell=0}^\infty {\sB[1-2\ee^{-\Omega_\eps\sA[(2\alpha^2)^\ell]}]} ]^2,
\]
and the infinite product converges.

By induction again, for each $\ell\le k$,
\[
\abs|{|z_\ell| - \frac n2}| < (\delta-\beta)\sqrt n + \frac{\beta(1-\alpha)}2 \sum_{\ell=0}^{+\infty} \alpha^\ell \sqrt n = \sA[\delta - \frac\beta 2]\sqrt n.
\]
Also, by~\rf(eqn:trSell), $|S_k|\le \frac{\beta}2\sqrt n$. Hence, all the inputs queried by the algorithm after the $k$th iteration of the loop are also in the middle layers of the cube.

\subsection*{Acknowledgements}
A.B. is supported by FP7 FET Proactive project QALGO.
Part of this work was completed while A.B. was at the University of Latvia.
E.B. is supported by an NSERC Discovery grant.

\bibliographystyle{habbrvM}

\bibliography{belov}

\end{document}